\documentclass[12pt, a4paper, reqno]{amsart}
\usepackage{mathrsfs}
\usepackage{bbm}
\usepackage{pgf,tikz}
\usepackage{tabu}
\usepackage{amsmath,amscd,amssymb,latexsym}
\usepackage[all]{xy}

\input xypic

\textwidth=152mm \oddsidemargin=4mm
\evensidemargin=\oddsidemargin
\textheight=238mm \advance\voffset-20mm \headheight=10mm \headsep=8mm


\DeclareMathVersion{can}
\DeclareMathAlphabet{\can}{OT1}{cmss}{m}{n}
\vfuzz2pt 
\hfuzz2pt 
\newtheorem{thm}{Theorem}[section]
\newtheorem{cor}[thm]{Corollary}
\newtheorem{lem}[thm]{Lemma}
\newtheorem{prop}[thm]{Proposition}
\newtheorem{rem}[thm]{Remark}
\newtheorem{exa}[thm]{Example}
\theoremstyle{definition}
\newtheorem{defn}[thm]{Definition}
\theoremstyle{fact}

\theoremstyle{conjecture}

\numberwithin{equation}{section}


\newcommand{\Tr}{\operatorname{Tr}}


\begin{document}
\title[Characterization of $p$-ary functions in terms of association schemes]
{Characterization of  $p$-ary functions in terms of association schemes and its applications}

\subjclass[2000]{11T71, 97K30, 94B05, 94C10}
\keywords{ $p$-ary function; association scheme; Walsh transform; few-weight linear code}

\author[Y. Wu]{ Yansheng Wu}
\address{\rm  School of Computer Science, Nanjing University of Posts and Telecommunications, Nanjing 210023, China; Shanghai
Key Laboratory of Trustworthy Computing, East China Normal University,
Shanghai, 200062, China}
\email{yanshengwu@njupt.edu.cn}

\author[J. Y. Hyun]{Jong Yoon Hyun}
 \address{\rm Konkuk University, Glocal Campus, 268 Chungwon-daero Chungju-si Chungcheongbuk-do 27478, South Korea}
\email{hyun33@kku.ac.kr}

\author[Y. Lee]{ Yoonjin Lee}
\address{\rm Department of Mathematics, Ewha Womans
University, Seoul 120-750, South Korea}
\email{yoonjinl@ewha.ac.kr}




\date{\today}

\thanks{\tiny
Y. Wu was supported by the National Natural Science Foundation of China (Grant No. 12101326), the Natural Science Foundation of Jiangsu Province (Grant No. BK20210575 ). J.Y. Hyun was supported by the National Research Foundation of Korea(NRF) grant
funded by the Korea government(MEST)(NRF-2017R1D1A1B05030707).
Y. Lee was supported by Basic Science Research Program
through the National Research Foundation of Korea(NRF) funded by the Ministry of Education
(Grant No. 2019R1A6A1A11051177) and also by the National Research Foundation of Korea(NRF)
grant funded by the Korea government (MEST)(NRF-2017R1A2B2004574).}

\begin{abstract}
We obtain an explicit criterion for $p$-ary functions to produce association schemes in terms of their Walsh spectrum. Employing this characterization, we explicitly find a correlation between $p$-ary bent functions
and association schemes; to be more exact, we prove that a $p$-ary bent function induces a $p$-class association scheme if and only if the function is weakly regular. As applications of our main criterion, we construct many infinite families of few-class association schemes arising from $p$-ary functions. Furthermore, we present four classes of $p$-ary two-weight linear codes, which are constructed from the association schemes produced in this paper.
\end{abstract}

\maketitle
\section{Introduction}

Association schemes are originated from combinatorial design theory and character theory of finite groups.
In algebraic combinatorics \cite{BI, G}, association schemes are connected with both combinatorial designs and coding theory~\cite{DL}. There have been many developments on construction of association schemes in an algebraic approach \cite{D} as well as using many objects such as Schur rings \cite{M}, nonlinear functions \cite{BP, OP,PTFL,V},  partitions \cite{GL,ZE1, ZE2}, and so on.

Recently, there have been active developments on construction of association schemes using bent functions or plateaued functions. Introducing some of the previous results,  Tan  {\em et al.}  \cite{TPF} constructed many strongly regular graphs by using ternary bent functions and then Pott {\em et al.} \cite{PTFL} obtained $p$-class association schemes by employing $p$-ary weakly regular bent functions, where $p$ is an odd prime number. Afterwards, Mesnager and Sinak \cite{MS} extended the construction of association schemes to weakly regular plateaued functions. Most recently, \"Ozbudak and  Pelen \cite{OP} constructed some $2$-class symmetric association schemes from non-weakly regular bent functions.
Inspired by the previous works, it is natural to investigate a characterization of $p$-ary functions in terms of
association schemes.

In this paper, we obtain an explicit criterion for $p$-ary functions to produce association schemes in terms of their Walsh spectrum. Employing this characterization, we explicitly find a correlation between $p$-ary bent functions
and association schemes; to be more exact, we prove that a $p$-ary bent function induces a $p$-class association scheme if and only if the function is weakly regular. In fact, there is a previous result on the study of connection of $p$-ary bent functions with strongly regular graphs \cite[Theorem A]{HL}.  As applications of our main criterion, we construct many infinite families of $t$-class association schemes arising from $p$-ary functions, where $t=2, 3, 4, 5, 6.$ Furthermore, we present four classes of $p$-ary two-weight linear codes, which are constructed from the association schemes produced in this paper.

Our paper is organized as follows. In Section 2, we introduce some necessary definitions and some previous results. In Section 3, using the Walsh spectrum of $p$-ary functions, we obtain a main result on association schemes (Theorem~\ref{eqn:thm3.4}). In Section 4, we consider some applications of the main result. Specifically, we first prove
that a $p$-ary bent function induces a $p$-class association scheme if and only if the function is weakly regular (Theorem 4.2). Secondly, we construct several classes of association schemes by considering the $p$-ary functions such that $f(ax)=f(x)$ for all $(a,x)\in\mathbb{F}^*_p\times\mathbb{F}_q$. We find some 2-class, 3-class, 4-class, 5-class, and 6-class association schemes (Propositions 4.6, 4.9, 4.11). Finally, we get some two-weight linear codes from the schemes (Proposition 4.15).
\section{Preliminaries}











\subsection{Symmetric association schemes}$~$


In this section, we introduce some basic notions regarding association schemes, weakly regular $p$-ary bent functions, and Fourier-reflexive partitions.

\begin{defn}Let $X$ be a finite set. A \emph{symmetric $d$-class association scheme}
$(X, \{R_l\})_{i=0}^d$
is a partition of $X\times X$ into binary relations  $R_0,R_1,\ldots,R_d$ satisfying the following properties:\\
$\bullet$ $R_0=\{(x,x):x\in X\};$\\
$\bullet$ $R_i$ is symmetric for $i=1,2,\ldots,d,$ that is, $(x,y)\in R_i$ if and only if $(y,x)\in R_i;$\\
$\bullet$ for all $i,j,k \in \{0,1,\ldots,d\}$ there is an integer
$p^k_{ij}$ such that for all $(u,v)\in R_k$,
\[
p^k_{ij}=|\{w\in X:(u,w)\in R_i \mbox{ and } (w,v)\in R_j\}|.
\]
\end{defn}

One of the well-known constructions  of association schemes uses the Schur rings as follows.
Let $G=\cup_{i=0}^dD_i$ be a finite abelian group, where the union is pairwise disjoint and
$D_i\neq\emptyset$ for every $i=1, 2, \ldots, d$, and the following properties are satisfied:

$\bullet$ $D_0 = \{1_G\}$;

$\bullet$ $D^{-1}_i= D_i$ for any $i\in\{0, 1,\ldots, d\}$, where $D^{-1}= \{g^{-1} : g \in D_i\}$;

$\bullet$ $D_iD_j =\sum_{k=0}^d p^k_{ij}D_k$ for any $i$ and $j$ with $0 \leq i, j\leq d$, where $p^k_{ij}$ are integers.

Then the subset $\langle D_0,\dots,D_d\rangle$ in $\mathbb{C}[G]$ spanned by $D_0,\ldots,D_d$ is
so-called a {\it Schur ring} over $G$. The configuration $(G, {R_i})_{0\leq i\leq d}$ forms a
$d$-class symmetric association scheme on $G$, where $R_i = \{(g, h) : gh^{-1}\in D_i\}$ for $0 \leq i \leq d$.

\subsection{Weakly regular $p$-ary bent functions}$~$

Let $f$ be a $p$-\emph{ary function} from $\mathbb{F}_{q}$ to $\mathbb{F}_p$, where $q=p^m$.
We define $D_{f,i}$ associated with $f$ by the set
\[
D_{f,i}=\{\beta\in\mathbb{F}_q:f(\beta)=i\}.
\]
Let $\zeta_p$ be a primitive $p$-th root of unity in a complex field $\mathbb C$.
The \emph{Walsh-Hadamard transform} $W_f$ of a $p$-ary function $f$ is a complex-valued
function of $\mathbb{F}_{q}$ defined by
\[
W_f(\beta)=\sum\limits_{x\in \mathbb{F}_{q}}\zeta^{f(x)-\operatorname{Tr}(\beta x)}_p,
\]
where $\Tr$ is the trace function from $\mathbb{F}_q$ to $\mathbb{F}_p$.
The inverse Walsh-Hadamard transform of a $p$-ary function $f$ is given by
\[
\zeta^{f(\beta)}_p=p^{-m}\sum\limits_{x\in \mathbb{F}_{p^m}}W_f(x)\zeta^{\Tr(\beta x)}_p.
\]

A $p$-ary function $f$ is called
$\emph{bent}$ if $|W_f(\beta)|^2=q$ for any $\beta\in\mathbb{F}_{q}$.
It is known that if $f$ is a $p$-ary bent function, then
\begin{align}\label{eqn:(1)}
W_f(\beta)=\left\{
\begin{array}{c l}
\pm p^{\frac{m}{2}}\zeta^{g(\beta)}_p & \mbox{ if }\; m\mbox{ even, or } m
\mbox{ odd and } p\equiv1~(\mbox{mod }4),\\
\pm\sqrt{-1}p^{\frac{m}{2}}\zeta^{g(\beta)}_p  & \mbox{ if }m \mbox{ odd and }
p\equiv3~(\mbox{mod }4)
\end{array}\right.
\end{align}
for some $p$-ary function $g:\mathbb{F}_{q}\rightarrow\mathbb{F}_p$, which is
called the {\it associated function of} $f$.
A $p$-ary bent function $f$ is \emph{weakly regular} if there
is a complex number $u$ with unit magnitude such that $W_f (\beta) = u p^{\frac{m}{2}}\zeta^{\tilde{f}(\beta)}_p$
for some $p$-ary function $\tilde{f}:\mathbb{F}_{q}\rightarrow\mathbb{F}_p$.
In this case, we call $\tilde{f}$ the \emph{dual} of $f$.
In particular, when $u=1$, we say that $f$ is \emph{regular} $p$-ary bent.
We call $\epsilon$ the \emph{sign} of $W_f$ of a weakly regular $p$-ary bent function $f$
when $W_f (\beta) = \epsilon\eta(-1)^{m/2} p^{\frac{m}{2}}\zeta^{\tilde{f}(\beta)}_p$
for every $\beta\in\mathbb{F}_q$ and $\eta(a) = a^{(p-1)/2}$.

\subsection{Fourier-reflexive partitions}$~$

It is well known that every additive character of $\mathbb{F}_{q}$ can be expressed as
$$\chi_{a}(x)=\zeta_{p}^{\Tr(ax)},\ x\in \mathbb{F}_{q}$$
for some $a$ in $\mathbb{F}_{q}$.
In particular, we call $\chi_0$ the {\it trivial additive character} and $\chi_1$ the {\it canonical additive character} of $\mathbb{F}_{q}$. The orthogonality relation of additive characters (see \cite{LN}) is given by
$$ \sum_{x\in \mathbb{F}_{q}}\chi_a(x)=q\delta_{0,a},$$
where $\delta$ is the Kronecker delta function.
For a subset $D$ of $\mathbb{F}_q$ and $\beta \in \mathbb{F}_q$, we define
\[
	\chi_{\beta}(D) := \sum_{\alpha\in D}\zeta^{\operatorname{Tr}(\beta\alpha)}_p.
\]


We say that $P=\{P_i\}_{i=0}^d$ is a \emph{partition} of a set $X$ if all the blocks $P_i$ are pairwise disjoint and cover $X$. We write $|P|$ for the number of blocks in the partition $P$. Two partitions $P$ and $Q$ of a set X are
called {\it identical} if $|P| = |Q|$ and each block of two partitions coincides after suitable reordering.

Zinoviev and  Ericson in \cite{ZE2} proved the following result.

\begin{lem} \cite{ZE2} \label{eqn:lem2.2}
Let  $P=\{P_i\}_{i=0}^d$  be a partition of $\Bbb F_{q}$,
where $P_0=\{0\}$,
and let $R_i$ be a partition of $\Bbb F_{q}\times\Bbb F_{q}$ defined by
\[
(\alpha,\beta)\in R_i \Longleftrightarrow \alpha-\beta\in P_i, \; \; i=0,1,\ldots,d.
\]
Then $(\Bbb F_{q}, \{R_l\})_{l=0}^d$ is a symmetric association scheme on $\Bbb F_{q}$
if and only if there is another partition  $Q=\{Q_i\}_{i=0}^d$ of $\Bbb F_{q}$  such that for any $i,j\in\{0,1,\ldots,d\},$

$(1)$ the sum $\chi_{\alpha}(P_i)$
only depends on  the number $j$ such that  $\alpha\in Q_j$ and

$(2)$ the sum $\chi_{\beta}(Q_j)$
only depends on the number $i$ such that  $\beta\in P_i$.
\end{lem}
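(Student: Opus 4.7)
The plan is to prove both implications by Fourier analysis on the additive group $\mathbb{F}_q$, exploiting that the candidate relations $R_i$ are translation invariant. Writing the adjacency matrix of $R_i$ as $A_i(\alpha,\beta)=\mathbf{1}_{P_i}(\alpha-\beta)$, every $A_i$ lies in the group algebra $\mathbb{C}[\mathbb{F}_q]$ and is simultaneously diagonalised by the additive characters $\chi_\alpha$, with eigenvalue $\chi_\alpha(P_i)$ on the character vector $(\chi_\alpha(x))_{x\in\mathbb{F}_q}$.

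For the $(\Leftarrow)$ direction, I would fix $(\alpha,\beta)\in R_k$ so that $\gamma:=\alpha-\beta\in P_k$, and rewrite the candidate intersection count as a convolution via the substitution $a=\alpha-w$, $b=w-\beta$:
\[
\bigl|\{w:\alpha-w\in P_i,\ w-\beta\in P_j\}\bigr| \;=\; (\mathbf{1}_{P_i}\ast\mathbf{1}_{P_j})(\gamma).
\]
Fourier inversion combined with the convolution--product rule then gives
\[
(\mathbf{1}_{P_i}\ast\mathbf{1}_{P_j})(\gamma) \;=\; \frac{1}{q}\sum_{\alpha'\in\mathbb{F}_q}\chi_{\alpha'}(P_i)\,\chi_{\alpha'}(P_j)\,\overline{\chi_{\alpha'}(\gamma)}.
\]
Using (1) I would group the outer sum by $Q$-class: writing $\chi_{\alpha'}(P_i)=p_i(l)$ for $\alpha'\in Q_l$, the expression reduces to $q^{-1}\sum_l p_i(l)\,p_j(l)\,\overline{\chi_\gamma(Q_l)}$. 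Hypothesis (2) now says that $\overline{\chi_\gamma(Q_l)}$ depends only on the index $k$ with $\gamma\in P_k$, so the whole count depends only on $k$, producing the intersection numbers $p^k_{ij}$.

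For the $(\Rightarrow)$ direction, I would work with the Bose--Mesner algebra $\mathcal{A}=\langle A_0,\dots,A_d\rangle$, a commutative $(d+1)$-dimensional $\ast$-subalgebra of $\mathbb{C}^{q\times q}$. It therefore has exactly $d+1$ primitive idempotents $E_0,\dots,E_d$, each projecting onto a common eigenspace of the $A_i$'s. I would define
\[
Q_j \;:=\; \{\alpha\in\mathbb{F}_q : E_j\chi_\alpha=\chi_\alpha\},\qquad 0\le j\le d;
\]
equivalently, $\alpha\in Q_j$ iff the eigenvalue tuple $(\chi_\alpha(P_i))_{i=0}^d$ matches the $j$-th joint eigenvalue, making (1) automatic. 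For (2) I would compare two expressions for the $(\alpha,\beta)$-entry of $E_j$: the Bose--Mesner expansion $E_j=q^{-1}\sum_k q_j(k)\,A_k$ (with $q_j(k)$ the dual eigenvalues) yields $q^{-1}\sum_k q_j(k)\,\mathbf{1}_{P_k}(\alpha-\beta)$, while the spectral-projector formula yields $q^{-1}\sum_{\gamma\in Q_j}\chi_\gamma(\alpha)\,\overline{\chi_\gamma(\beta)}=q^{-1}\chi_{\alpha-\beta}(Q_j)$. Equating the two and setting $\beta':=\alpha-\beta\in P_k$ gives $\chi_{\beta'}(Q_j)=q_j(k)$, which is precisely (2).

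The main obstacle will be matching $|Q|$ to $|P|=d+1$ in the $(\Rightarrow)$ direction, i.e.~showing that $\mathcal{A}$ has dimension exactly $d+1$ so it produces exactly $d+1$ classes $Q_j$; this reduces to the linear independence of $A_0,\dots,A_d$, which is built into the definition of an association scheme. Secondary technicalities are the symmetry bookkeeping ($P_i=-P_i$, and the identity $\chi_\gamma(-Q_l)=\overline{\chi_\gamma(Q_l)}$) in the forward direction, both of which are handled by complex conjugation of character sums.
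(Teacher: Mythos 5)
The paper offers no proof of this lemma: it is imported verbatim from Zinoviev--Ericson [ZE2], so there is nothing in the text to compare your argument against, and it must be judged on its own. Your route is the standard duality theory of translation association schemes, and its two main computations are correct. In the $(\Leftarrow)$ direction, the identity $|\{w:(\alpha,w)\in R_i,\ (w,\beta)\in R_j\}|=(\mathbf{1}_{P_i}\ast\mathbf{1}_{P_j})(\alpha-\beta)$, Fourier inversion, and the regrouping by $Q$-classes do show that the count depends only on the index $k$ with $\alpha-\beta\in P_k$. In the $(\Rightarrow)$ direction, the Bose--Mesner argument is the right one, and you correctly isolate the crux: the algebra has dimension exactly $d+1$, all $d+1$ primitive idempotents are nonzero, hence the partition of $\mathbb{F}_q$ by joint eigenvalue tuples has exactly $d+1$ blocks, and comparing the two expansions of $E_j$ yields condition (2). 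One small correction there: the eigenvalue of $A_i$ on the character vector $(\chi_\gamma(x))_{x}$ is $\overline{\chi_\gamma(P_i)}=\chi_\gamma(-P_i)$, not $\chi_\gamma(P_i)$; this is harmless only because in that direction the scheme is assumed symmetric, so $P_i=-P_i$, and you should say so explicitly.

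The one genuine gap is the symmetry axiom in the $(\Leftarrow)$ direction, which you defer to ``symmetry bookkeeping\ldots handled by complex conjugation of character sums.'' Conditions (1) and (2) do not imply $P_i=-P_i$, and without that the $R_i$ are not symmetric, so no amount of conjugation will close this step. Concretely, take $q=p>2$ and let $P$ be the partition of $\mathbb{F}_p$ into singletons; then $Q=P$ satisfies (1) and (2), since $\chi_\alpha(P_i)=\zeta_p^{\alpha i}$ is certainly determined by the singleton containing $\alpha$, yet $(\alpha,\beta)\in R_i$ iff $(\beta,\alpha)\in R_{-i}$, so the resulting scheme is commutative but not symmetric. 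The statement as transcribed in the paper silently glosses over this as well: the correct formulation either adds the hypothesis $P_i=-P_i$ for all $i$ (in which case symmetry of each $R_i$ is immediate and the rest of your proof goes through unchanged), or weakens ``symmetric'' to ``commutative.'' You should state which repair you are making rather than leaving it as a bookkeeping remark.
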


In this case, one can show that the partition $Q$ is unique up to permutation on $\{0,1,\ldots,d\}$. The partition $P$ in Lemma 2.2 is a so-called \emph{$B$-partition} \cite{ZE2}. Moreover, if $P=Q$, then the partition is also called $F$-{\it partition} \cite{ZE1}.

We define Fourier-reflexive partitions introduced by Gluesing-Luerssen \cite{GL} as follows.

\begin{defn} Let  $P=\{P_i\}_{i=0}^d$ be a partition of $G=\mathbb{F}_q$. The dual partition, denoted by $\widehat{P}$, is the partition of the character group $\widehat{G}$ defined by the equivalence relation
$$ \chi\sim_{\widehat{P}} \chi' \Longleftrightarrow \chi (P_i)=\chi'(P_i) \mbox{ for all } i=0, \ldots, d,$$ where $\chi$ and $\chi'$ are  characters of $\widehat{G}$. By identifying the group $G$ with its character group $\widehat{G}$ via $\beta\mapsto\chi_{\beta}$, the equivalence relation is restated as follows:
$$ \beta\sim_{\widehat{P}} \beta' \Longleftrightarrow \chi_{\beta} (P_i)=\chi_{\beta'}(P_i) \mbox{ for all } i=0, \ldots, d,$$
where $\chi_{\beta}$ and $\chi_{\beta'}$ are characters of $\widehat{G}$.
The partition $P$ is called \emph{Fourier-reflexive} if $\widehat{\widehat{P}}=P$.
\end{defn}

We point out that if $\widehat{P}=\{Q_j\}_{j=0}^{d'}$ is the dual partition of $P$, then $Q_j=\{0\}$ for some $j=0,1,\ldots,d'$. Indeed,
let $0\sim_{\widehat{P}} \beta$. Then $\chi_{\beta}(P_i)=|P_i|$ for all $i=0,1,\ldots,d$; therefore, we have $q=\sum_{i=0}^d|P_i|=\sum_{i=0}^d\chi_{\beta}(P_i)=\sum_{x\in\mathbb{F}_q}\zeta^{\Tr(\beta x)}_p=q\delta_{0,\beta}$. Thus we get $\beta=0$.
From this observation, we may always assume that $Q_0=\{0\}$.

We also point out that the blocks $Q_j$ of $\widehat{P}=\{Q_j\}_{j=0}^{d'}$ become
\[
Q_j=\{\beta\in \mathbb{F}_q:\chi_{\beta}(P_i)=c_{i,j}\text{ for all }i=0,\ldots,d\},
\]
where $c_{i,j}$ are complex numbers.

\begin{lem}\cite[Theorem 2.4]{GL}  \label{eqn:lem2.4}
A partition $P=\{P_i\}_{i=0}^d$ of $\mathbb{F}_q$ is Fourier-reflexive if and only if $|P|=|\widehat{P}|$.
\end{lem}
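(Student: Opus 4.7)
My plan is to reformulate the dual partition in terms of the Fourier transform $\mathcal{F}\colon\mathbb{C}^{\mathbb{F}_q}\to\mathbb{C}^{\mathbb{F}_q}$, $\mathcal{F}(f)(\beta)=\sum_{x}f(x)\chi_\beta(x)$, and the subspace $\mathcal{A}(P)\subseteq\mathbb{C}^{\mathbb{F}_q}$ of functions that are constant on the blocks of a partition $P$, whose $\mathbb{C}$-dimension equals $|P|$ with basis $\{1_{P_i}\}_{i=0}^d$. The defining identity $\mathcal{F}(1_{P_i})(\beta)=\chi_\beta(P_i)$, combined with the definition of $\widehat{P}$, shows that $\mathcal{F}(1_{P_i})\in\mathcal{A}(\widehat{P})$, so $\mathcal{F}(\mathcal{A}(P))\subseteq\mathcal{A}(\widehat{P})$. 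Injectivity of $\mathcal{F}$ then gives the single structural inequality $|P|\leq|\widehat{P}|$, valid for every partition; this one inequality will drive the entire argument.

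The ``only if'' direction follows immediately: if $\widehat{\widehat{P}}=P$, applying the inequality to both $P$ and $\widehat{P}$ yields $|P|\leq|\widehat{P}|\leq|\widehat{\widehat{P}}|=|P|$, forcing $|P|=|\widehat{P}|$. For the ``if'' direction I assume $|P|=|\widehat{P}|$ and prove $\widehat{\widehat{P}}=P$ by showing that each partition refines the other. The equality of dimensions upgrades $\mathcal{F}(\mathcal{A}(P))\subseteq\mathcal{A}(\widehat{P})$ to $\mathcal{F}(\mathcal{A}(P))=\mathcal{A}(\widehat{P})$, hence $\mathcal{F}^{-1}(\mathcal{A}(\widehat{P}))=\mathcal{A}(P)$. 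Applying this to the indicator of each block $Q_j$ and computing $\mathcal{F}^{-1}(1_{Q_j})(x)=q^{-1}\overline{\chi_x(Q_j)}$, I conclude that $x\mapsto\chi_x(Q_j)$ is constant on every $P_i$; this is exactly the statement that two elements of the same $P$-block are $\widehat{\widehat{P}}$-equivalent, so $P$ refines $\widehat{\widehat{P}}$.

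For the opposite refinement I would argue unconditionally. Writing $\mathcal{F}(1_{P_i})=\sum_j c_{i,j}\,1_{Q_j}$, where $c_{i,j}$ denotes the common value of $\chi_\beta(P_i)$ on $Q_j$, Fourier inversion yields $1_{P_i}(x)=q^{-1}\sum_j c_{i,j}\overline{\chi_x(Q_j)}$. The right-hand side depends on $x$ only through the tuple $(\chi_x(Q_j))_j$, which is constant on $\widehat{\widehat{P}}$-blocks, so each $P_i$ is a union of $\widehat{\widehat{P}}$-blocks; that is, $\widehat{\widehat{P}}$ refines $P$. Combining with the preceding paragraph yields $\widehat{\widehat{P}}=P$, so $P$ is Fourier-reflexive. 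The one technical point that will require care is keeping the Fourier conventions consistent: the factor $1/q$ in the inversion formula and the complex conjugate entering via $\chi_{-\beta}(x)=\overline{\chi_\beta(x)}$ must be tracked correctly, but the conceptual core reduces to the single inequality $|P|\leq|\widehat{P}|$ together with one dimension collapse.
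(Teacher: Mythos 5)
Your proof is correct. Note that the paper itself offers no proof of this lemma---it is imported verbatim from \cite[Theorem 2.4]{GL}---so there is nothing internal to compare against; your argument is essentially the standard one from that source: the Fourier transform carries the span of the indicators $1_{P_i}$ injectively into the space of functions constant on the blocks of $\widehat{P}$, giving the unconditional inequality $|P|\leq|\widehat{P}|$, the unconditional refinement $\widehat{\widehat{P}}\preceq P$ via Fourier inversion, and the dimension collapse under $|P|=|\widehat{P}|$ that forces the reverse refinement. The bookkeeping of the $1/q$ factor and the conjugation is handled correctly (and is harmless, since the conjugate of a function constant on blocks is again constant on blocks), so the argument is complete as written.
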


By using Lemmas \ref{eqn:lem2.2} and \ref{eqn:lem2.4}, we have the following result.

\begin{thm} \cite{GL,ZE2} \label{eqn:thm2.5}
Let  $P=\{P_i\}_{i=0}^d$ be a partition of $\Bbb F_{q}$,
where $P_0=\{0\}$, and let $\widehat{P}=\{Q_j\}_{j=0}^{d'}$ be the dual partition of $P$, where $Q_0=\{0\}$.
Let $R_i$ be a partition of $\Bbb F_{q}\times\Bbb F_{q}$ defined by
\[
(\alpha,\beta)\in R_i \Longleftrightarrow \alpha-\beta\in P_i, \; \; i=0,1,\ldots,d.
\]
A necessary and sufficient condition such that $(\Bbb F_{q}, \{R_i\})_{i=0}^d$ is a symmetric association scheme on $\Bbb F_{q}$ is $d=d'$.

 \end{thm}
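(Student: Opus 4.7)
The plan is to establish Theorem~\ref{eqn:thm2.5} by combining Lemmas~\ref{eqn:lem2.2} and \ref{eqn:lem2.4} with a character-orthogonality/rank argument. For the ``if'' direction, I would assume $d = d'$, so $|P| = d+1 = d'+1 = |\widehat{P}|$, whence Lemma~\ref{eqn:lem2.4} yields Fourier-reflexivity $\widehat{\widehat{P}} = P$. Setting $Q := \widehat{P} = \{Q_j\}_{j=0}^d$, condition~(1) of Lemma~\ref{eqn:lem2.2} holds by the very definition of $\widehat{P}$, and condition~(2) follows because $\widehat{Q} = \widehat{\widehat{P}} = P$, so that $\chi_\beta(Q_j)$ is determined by the $P$-block of $\beta$. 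Lemma~\ref{eqn:lem2.2} then delivers the symmetric association scheme.

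For the ``only if'' direction, assume $(\mathbb{F}_q,\{R_i\}_{i=0}^d)$ is a symmetric association scheme. Lemma~\ref{eqn:lem2.2} supplies a partition $Q=\{Q_j\}_{j=0}^d$ of $\mathbb{F}_q$ satisfying conditions~(1) and (2). Condition~(1) lets me write $\chi_\alpha(P_i) = c_{i,j}$ whenever $\alpha \in Q_j$, and it immediately forces $Q$ to refine $\widehat{P}$, so $d' \le d$. To obtain the reverse inequality $d \le d'$, I would invoke the standard orthogonality of additive characters,
\[
\sum_{\alpha \in \mathbb{F}_q} \chi_\alpha(P_i)\overline{\chi_\alpha(P_{i'})} = q\,|P_i|\,\delta_{i,i'},
\]
and regroup the sum by $Q$-blocks using~(1) to obtain $\sum_{j=0}^{d} |Q_j|\, c_{i,j}\overline{c_{i',j}} = q\,|P_i|\,\delta_{i,i'}$. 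After a diagonal rescaling of rows and columns, this exhibits the $(d+1)\times(d+1)$ matrix $(c_{i,j})$ as unitary, and hence of full rank $d+1$. In particular, its $d+1$ columns are pairwise distinct. Since two columns $j$ and $j'$ coincide exactly when $Q_j$ and $Q_{j'}$ lie in the same $\widehat{P}$-block, this forces $d'+1 = |\widehat{P}| \ge d+1$, i.e.\ $d' \ge d$. Combining both bounds yields $d=d'$.

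The main obstacle is the lower bound $|\widehat{P}| \ge |P|$ in the ``only if'' direction; the refinement $Q \succeq \widehat{P}$ gives the opposite inequality for free, but the rank argument is what pins the two together. It is worth noting that only condition~(1) of Lemma~\ref{eqn:lem2.2} is used in the rank computation: the additive-character orthogonality together with the block structure of $Q$ already suffices. Once $|\widehat{P}| = |P|$ is established, Lemma~\ref{eqn:lem2.4} retroactively confirms that the partition $P$ arising in any such scheme must be Fourier-reflexive and, consequently, $Q = \widehat{P}$ (matching the uniqueness remark following Lemma~\ref{eqn:lem2.2}).
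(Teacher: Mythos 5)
Your proof is correct and takes the same route the paper intends: the paper states Theorem~\ref{eqn:thm2.5} with only the remark that it follows ``by using Lemmas~\ref{eqn:lem2.2} and \ref{eqn:lem2.4},'' and your argument is exactly that derivation. The character-orthogonality/rank step in the ``only if'' direction --- showing the partition $Q$ supplied by Lemma~\ref{eqn:lem2.2} must coincide with $\widehat{P}$, so that $d'\ge d$ --- correctly fills in the one nontrivial detail the paper leaves implicit.
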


\section{A main result}

In this section, we discuss the main result of this paper. We find an explicit criterion
for $p$-ary functions to produce association schemes in terms of their Walsh spectrum.




\begin{lem} \label{eqn:lem3.1}
 Let $f$ be a $p$-ary function. For a given $\beta\in\mathbb{F}_q$, let
 \[
 N^f_{i,j}(\beta)=|\{x\in\mathbb{F}_q:f(x)=i,~\Tr(\beta x)=j\}|.
 \]
 Then we have
 \[
 p^2N^f_{i,j}(\beta)=-q+p|D_{f,i}|+p\sum_{x\in \Bbb F_q}\delta_{j,\Tr(\beta x)}+\sum_{y\in\mathbb{F}^*_p}\left(\zeta_p^{-i}\sum_{z\in\mathbb{F}^*_p}\zeta_p^{jz}W_f(z \beta)\right)^{\sigma_{y}},
 \]
 where $\sigma_y$ is an automorphism of $\mathbb{Q}(\zeta_p)$ over $\mathbb{Q}$
such that $\sigma_y(\zeta_p)=\zeta_p^y$ for $y\in\mathbb{F}_p^*$.
\end{lem}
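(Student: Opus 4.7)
The plan is to compute $N^f_{i,j}(\beta)$ by the standard character-orthogonality trick, namely using the identities
\[
\mathbf{1}_{f(x)=i}=\frac{1}{p}\sum_{y\in\mathbb{F}_p}\zeta_p^{y(f(x)-i)},\qquad \mathbf{1}_{\Tr(\beta x)=j}=\frac{1}{p}\sum_{z\in\mathbb{F}_p}\zeta_p^{z(\Tr(\beta x)-j)},
\]
to rewrite
\[
p^2 N^f_{i,j}(\beta)=\sum_{y,z\in\mathbb{F}_p}\zeta_p^{-yi-zj}\sum_{x\in\mathbb{F}_q}\zeta_p^{yf(x)+z\Tr(\beta x)}.
\]
Then I would split the outer double sum into the four regions $(y,z)=(0,0)$, $y=0\ne z$, $z=0\ne y$, and $y,z\in\mathbb{F}_p^*$, and treat each separately.

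For the first three regions, everything reduces to cardinalities. The $(0,0)$ term is just $q$. For the full $y=0$ strip one has $\sum_{z\in\mathbb{F}_p}\zeta_p^{-zj}\sum_{x}\zeta_p^{z\Tr(\beta x)}=p\sum_{x}\delta_{j,\Tr(\beta x)}$, so the $y=0,\ z\ne0$ contribution is $p\sum_x \delta_{j,\Tr(\beta x)}-q$. Symmetrically, the $z=0,\ y\ne0$ contribution is $p|D_{f,i}|-q$. Summed, these give $-q+p|D_{f,i}|+p\sum_x\delta_{j,\Tr(\beta x)}$, which is exactly the non-Walsh part of the claimed identity.

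The substantive step is the region $y,z\in\mathbb{F}_p^*$, where I need the Galois automorphisms to appear. I would substitute $w=y^{-1}z$ (so $z=yw$ and $w$ ranges over $\mathbb{F}_p^*$), and then use $\sigma_y(\zeta_p^{a})=\zeta_p^{ya}$ to rewrite
\[
\zeta_p^{yf(x)+z\Tr(\beta x)}=\sigma_y\bigl(\zeta_p^{f(x)+w\Tr(\beta x)}\bigr),\qquad \zeta_p^{-yi-zj}=\sigma_y\bigl(\zeta_p^{-i-wj}\bigr).
\]
Summing the first display over $x\in\mathbb{F}_q$ and recognising $\sum_x\zeta_p^{f(x)-\Tr((-w\beta)x)}=W_f(-w\beta)$, the $y,z\ne 0$ contribution becomes
\[
\sum_{y\in\mathbb{F}_p^*}\sigma_y\!\left(\zeta_p^{-i}\sum_{w\in\mathbb{F}_p^*}\zeta_p^{-wj}W_f(-w\beta)\right).
\]
Finally I would re-index the inner sum by $w\mapsto -w$ to convert $\zeta_p^{-wj}W_f(-w\beta)$ to $\zeta_p^{wj}W_f(w\beta)$, yielding the desired Walsh-sum expression. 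Adding up the four pieces produces the stated formula.

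The routine part is the orthogonality trick; the main obstacle, and the only nontrivial bookkeeping, is carrying the substitution $w=y^{-1}z$ and the subsequent sign-flip $w\mapsto -w$ through both the exponent $\zeta_p^{-yi-zj}$ and the Walsh transform argument so that the Galois action $\sigma_y$ cleanly factors out of the full $y,z\ne 0$ sum.
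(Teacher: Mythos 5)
Your proposal is correct and follows essentially the same route as the paper's own proof: the character-orthogonality expansion of $p^2N^f_{i,j}(\beta)$, the same four-way split of the $(y,z)$ sum, and the same substitution $z=yw$ followed by the sign flip $w\mapsto -w$ to factor the Galois action $\sigma_y$ out of the Walsh-transform sum. No gaps.
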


\begin{proof}
We have
\begin{multline}\label{eq1}
 p^2N^f_{i,j}(\beta)=\sum_{y,z\in \Bbb F_p} \sum_{x\in \Bbb F_q} \zeta_p^{y(f(x)-i)+z(\Tr(\beta x)-j)}
=\sum_{x\in\mathbb{F}_q}(1+\sum_{y\in\mathbb{F}^*_p}\zeta^{y(f(x)-i)}_p)(1+\sum_{z\in\mathbb{F}^*_p}\zeta^{z(\Tr(\beta x)-j}_p))\\
=q+\sum_{x\in \Bbb F_q} \sum_{y\in \Bbb F_p^*} \zeta_p^{y(f(x)-i)}+\sum_{x\in \Bbb F_q}\sum_{z\in \Bbb F_p^*}  \zeta_p^{z(\Tr(\beta x)-j)}+\sum_{y,z\in\mathbb{F}^*_p}\sum_{x\in\mathbb{F}_q}\zeta^{y(f(x)-i)+z(\Tr(\beta x)-j)}_p.
\end{multline}
The first two double sums in Eq.~\eqref{eq1} are computed as follows:
$$\sum_{x\in \Bbb F_q} \sum_{y\in \Bbb F_p^*} \zeta_p^{y(f(x)-i)}=\sum_{x\in \Bbb F_q}(-1+p\delta_{i,f(x)})=-q+p|D_{f,i}|$$ and $$\sum_{x\in \Bbb F_q}\sum_{z\in \Bbb F_p^*}  \zeta_p^{z(\Tr(\beta x)-j)}=\sum_{x\in \Bbb F_q}(-1+p\delta_{j,\Tr(\beta x)})=-q+p\sum_{x\in \Bbb F_q}\delta_{j,\Tr(\beta x)}.$$
The last double sum $\sum_{y,z\in\mathbb{F}^*_p}\sum_{x\in\mathbb{F}_q}\zeta^{y(f(x)-i)+z(\Tr(\beta x)-j)}_p$
in Eq.~\eqref{eq1} is simplified as the following:
\begin{eqnarray*}&&
\sum_{y,z\in\mathbb{F}^*_p}\zeta_p^{-iy-jz} \sum_{x\in\mathbb{F}_q}\zeta^{yf(x)+z\Tr(\beta x)}_p=\sum_{y,z\in\mathbb{F}^*_p}\zeta_p^{-iy-jz} \sum_{x\in\mathbb{F}_q}(\zeta^{f(x)+\Tr(\frac zy\beta x)}_p)^{\sigma_{y}}\\
&=&\sum_{y,z\in\mathbb{F}^*_p}\zeta_p^{-iy-jz}W_f(-\frac zy \beta)^{\sigma_{y}}
=\sum_{y\in\mathbb{F}^*_p}(\sum_{z\in\mathbb{F}^*_p}\zeta_p^{-i-j\frac zy}W_f(-\frac zy \beta))^{\sigma_{y}}\\
&=&\sum_{y\in\mathbb{F}^*_p}(\zeta_p^{-i}\sum_{z\in\mathbb{F}^*_p}\zeta_p^{jz}W_f(z \beta))^{\sigma_{y}}.
\end{eqnarray*}
The result follows immediately by combining the computation results obtained as above all together.
\end{proof}

\begin{cor}\label{eqn:cor3.2}
 Let $f$ be a $p$-ary function.
 Then for $\beta\in\mathbb{F}_q$ and $i\in\mathbb{F}_p$, we have
 \[
 \chi_{\beta}(D_{f,i})=\frac{1}{p^2}\sum_{j\in\mathbb{F}_p}\zeta^j_p\sum_{y\in\mathbb{F}^*_p}\left(\zeta_p^{-i}\sum_{z\in\mathbb{F}^*_p}\zeta_p^{jz}W_f(z \beta)\right)^{\sigma_{y}}.
 \]
\end{cor}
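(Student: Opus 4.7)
The plan is to express $\chi_\beta(D_{f,i})$ as a weighted count of the quantities $N^f_{i,j}(\beta)$ introduced in Lemma~\ref{eqn:lem3.1}, and then substitute that lemma directly. More precisely, I would first observe that by partitioning the sum $\sum_{x \in D_{f,i}} \zeta_p^{\Tr(\beta x)}$ according to the value $j = \Tr(\beta x) \in \mathbb{F}_p$, one obtains
\[
\chi_\beta(D_{f,i}) \;=\; \sum_{j \in \mathbb{F}_p} \zeta_p^{j}\, N^f_{i,j}(\beta).
\]
This reduces the corollary to the claim that, after dividing by $p^{2}$, the four-term expression of Lemma~\ref{eqn:lem3.1} collapses to only its last summand once one multiplies by $\zeta_p^{j}$ and sums over $j$.

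Next I would verify that the other three terms on the right-hand side of the formula in Lemma~\ref{eqn:lem3.1} contribute nothing. The terms $-q$ and $p|D_{f,i}|$ are independent of $j$, so after multiplication by $\zeta_p^{j}$ and summation over $j \in \mathbb{F}_p$ they are annihilated by the identity $\sum_{j \in \mathbb{F}_p} \zeta_p^{j} = 0$. For the third term, I would interchange the $j$- and $x$-sums and use the collapse $\sum_{j \in \mathbb{F}_p} \zeta_p^{j}\, \delta_{j,\Tr(\beta x)} = \zeta_p^{\Tr(\beta x)}$, reducing it to $p \sum_{x \in \mathbb{F}_q} \zeta_p^{\Tr(\beta x)}$, which vanishes by the orthogonality relation of additive characters whenever $\beta \neq 0$.

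What is left is precisely the fourth summand, which, divided by $p^{2}$, already has the exact form appearing on the right-hand side of the corollary. There is no serious obstacle here: the argument is essentially a Fourier-style inversion for fixed $i$ and is mechanical once Lemma~\ref{eqn:lem3.1} is in hand. The only technical point worth flagging is that the orthogonality step above requires $\beta \neq 0$; the statement should accordingly be read as concerning $\beta \in \mathbb{F}_q^{*}$, with the trivial case $\beta = 0$ treated separately via $\chi_0(D_{f,i}) = |D_{f,i}|$ (a case not needed for the subsequent Fourier-reflexive partition analysis).
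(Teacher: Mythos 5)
Your proof is correct and is essentially the paper's own argument: the paper's entire proof of this corollary is the one-line observation that $\chi_{\beta}(D_{f,i})=\sum_{j\in\mathbb{F}_p}N^f_{i,j}(\beta)\zeta^j_p$ combined with Lemma~\ref{eqn:lem3.1}, and you simply carry out the resulting cancellations explicitly. Your caveat about $\beta\neq 0$ is a genuine (if minor) catch not addressed in the paper --- at $\beta=0$ the third term contributes $pq$ rather than $0$, so the stated identity is off by $p^{m-1}$ there --- but this does not affect any subsequent use of the corollary.
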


\begin{proof}
The result follows from the fact that $\chi_{\beta}(D_{f,i})=\sum_{j=0}^{p-1}N^f_{i,j}(\beta)\zeta^j_p$ and using Lemma \ref{eqn:lem3.1}.
\end{proof}

The following proposition plays an important role in proving our main result.

\begin{prop}\label{eqn:prop3.3}
Let $f$ be a $p$-ary function and $f(\mathbb{F}_p)$ be the image of $\mathbb{F}_p$ under $f$. Then for any two distinct elements $\beta, \beta'\in\Bbb F_q^*$, we have
\[
\chi_{\beta}(D_{f,i})=\chi_{\beta'}(D_{f,i})\text{ for all } i\in f(\mathbb{F}_p)\iff
W_f(z\beta)=W_f(z\beta') \text{ for all } z\in\mathbb{F}^*_p.
\]
\end{prop}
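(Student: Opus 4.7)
The plan is to exploit two dual identities linking the character sums $\chi_\beta(D_{f,i})$ and the Walsh values $W_f(z\beta)$ for $z\in\mathbb{F}_p^*$. The first is Corollary~\ref{eqn:cor3.2} itself, which expresses each $\chi_\beta(D_{f,i})$ as an explicit $\mathbb{Q}(\zeta_p)$-linear combination of the $p-1$ values $\{W_f(z\beta):z\in\mathbb{F}_p^*\}$; this will deliver the $(\Leftarrow)$ direction at once. The second identity, obtained by partitioning $\mathbb{F}_q$ according to the value of $f$, is
\[
W_f(\gamma) \;=\; \sum_{i\in\mathbb{F}_p}\zeta_p^{\,i}\,\chi_{-\gamma}(D_{f,i}),\qquad \gamma\in\mathbb{F}_q,
\]
and this, combined with the Galois-equivariance $\chi_{c\gamma}(D)=\sigma_c\bigl(\chi_{\gamma}(D)\bigr)$ for $c\in\mathbb{F}_p^*$ (immediate from $\sigma_c(\zeta_p^{\Tr(\gamma\alpha)})=\zeta_p^{\Tr(c\gamma\alpha)}$), will deliver the $(\Rightarrow)$ direction.

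For $(\Leftarrow)$, substituting the hypothesis $W_f(z\beta)=W_f(z\beta')$ ($z\in\mathbb{F}_p^*$) into the right-hand side of Corollary~\ref{eqn:cor3.2} immediately gives $\chi_\beta(D_{f,i})=\chi_{\beta'}(D_{f,i})$ for every $i\in\mathbb{F}_p$, and in particular for $i\in f(\mathbb{F}_p)$. For $(\Rightarrow)$, I would first upgrade the hypothesis from $i\in f(\mathbb{F}_p)$ to every $i\in\mathbb{F}_p$ by noting that $D_{f,i}=\emptyset$ whenever $i$ lies outside the image of $f$, so that both character sums vanish trivially. Applying $\sigma_{-z}$ to the upgraded hypothesis for a fixed $z\in\mathbb{F}_p^*$ and invoking the Galois-equivariance then yields $\chi_{-z\beta}(D_{f,i})=\chi_{-z\beta'}(D_{f,i})$ for every $i\in\mathbb{F}_p$; substituting this into the displayed decomposition of $W_f$ at $\gamma=z\beta$ gives $W_f(z\beta)=W_f(z\beta')$, as required.

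The only step that calls for any thought is this transport of the identity along the multiplicative $\mathbb{F}_p^*$-action on the ``frequency'' variable $\beta$; the key observation is that this scaling is realised precisely by the $\mathbb{Q}$-Galois action on $\mathbb{Q}(\zeta_p)$, so no real obstacle arises and the whole argument reduces to bookkeeping around Corollary~\ref{eqn:cor3.2} and its inversion formula.
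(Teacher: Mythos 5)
Your proof is correct and follows essentially the same route as the paper: the $(\Rightarrow)$ direction via the expansion $W_f(\gamma)=\sum_i\zeta_p^i\chi_{-\gamma}(D_{f,i})$ together with the Galois-equivariance $\chi_{c\gamma}(D)=\sigma_c(\chi_\gamma(D))$ is exactly the paper's argument, and your $(\Leftarrow)$ direction just invokes Corollary~\ref{eqn:cor3.2} directly where the paper re-traverses Lemma~\ref{eqn:lem3.1} through the differences $N^f_{i,j}(\beta)-N^f_{i,0}(\beta)$ --- a cosmetic streamlining of the same computation. Your reading of $f(\mathbb{F}_p)$ as the image of $f$ (so that $D_{f,i}=\emptyset$ off the image) matches the paper's implicit usage.
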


\begin{proof}
Let us put $I=f(\mathbb{F}_p)$.

$(\Rightarrow)$ The Walsh-Hadamard transform of a $p$-ary function $f$ is expressed by
\[
W_f(\beta)=\sum_{i\in I}\chi_{-\beta}(D_{f,i})\zeta^i_p.
\]
By the assumption, for $z\in\mathbb{F}^*_p$ we have
$
\chi_{-z\beta}(D_{f,i})=\chi_{-\beta}(D_{f,i})^{\sigma_z}=\chi_{-\beta'}(D_{f,i})^{\sigma_z}=\chi_{-z\beta'}(D_{f,i})
$
for all $i\in I$. Thus for $z\in\mathbb{F}^*_p$ we have
\[
W_f(z\beta)=\sum_{i\in I}\chi_{-z\beta}(D_{f,i})\zeta^i_p
=\sum_{i\in I}\chi_{-z\beta'}(D_{f,i})\zeta^i_p=W_f(z\beta').
\]
$(\Leftarrow)$
By Lemma \ref{eqn:lem3.1} and Corollary \ref{eqn:cor3.2}, we have
\begin{eqnarray*}
&& W_f(z \beta)=W_f(z \beta') \mbox{ for all } z\in \mathbb{F}^*_p\\
&\Rightarrow& \sum_{z\in\mathbb{F}^*_p}(\zeta_p^{jz}-1)W_f(z \beta)=\sum_{z\in\mathbb{F}^*_p}(\zeta_p^{jz}-1)W_f(z \beta') \mbox{ for all } j\in \mathbb{F}_p\\
&\Rightarrow& \sum_{y\in\mathbb{F}^*_p}\left(\zeta_p^{-i}\sum_{z\in\mathbb{F}^*_p}(\zeta_p^{jz}-1)W_f(z \beta)\right)^{\sigma_{y}}=\sum_{y\in\mathbb{F}^*_p}\left(\zeta_p^{-i}\sum_{z\in\mathbb{F}^*_p}(\zeta_p^{jz}-1)W_f(z \beta')\right)^{\sigma_{y}}\\
&&\mbox{ for all } i\in I\text{ and }j\in \mathbb{F}_p\\
&\Rightarrow& N^f_{i,j}(\beta)- N^f_{i,0}(\beta)= N^f_{i,j}(\beta')- N^f_{i,0}(\beta') \text{ for all }i\in I\text{ and } j\in \mathbb{F}_p
\\
&\Rightarrow&\chi_{\beta}(D_{f,i})=\chi_{\beta'}(D_{f,i})\mbox{ for all } i\in I.
\end{eqnarray*}
The last implication follows from the fact that $\{\zeta_p,\ldots,\zeta^{p-1}\}$ is a basis for $\mathbb{Q}(\zeta_p)$ over $\mathbb{Q}$.
This completes the proof.
\end{proof}

We denote by $S^*$ the set of nonzero elements of a subset $S$ of $\mathbb{F}_q$.

Let $f$ be a $p$-ary function with $f(0)=0$.
Then $P=\{D^*_{f,i}\}_{i\in f(\mathbb{F}_q^*)}$ is a partition of $\Bbb F^*_{q}$.
Let $R_i$ be a partition of $\Bbb F_{q}\times\Bbb F_{q}$ defined by
\begin{align}\label{eqn:(3.1)}
R_{-1}=\{(\beta,\beta):\beta\in\mathbb{F}_q\},\text{ and }
(\alpha,\beta)\in R_i \Longleftrightarrow \alpha-\beta\in D^*_{f,i}
\text{ for } i\in f(\mathbb{F}_q^*).
\end{align}
If $(\mathbb{F}_q,\{R_i\})_{ i\in f(\mathbb{F}_q^*)})$, where $R_{-1}$ is deleted,
forms a symmetric $d$-class association scheme, then we say that the partition $P=\{D^*_{f,i}\}_{ i\in f(\mathbb{F}_q^*)}$ of $\mathbb{F}^*_q$ {\it induces an $d$-class association scheme} $(\mathbb{F}_q,\{R_i\}_{ i\in f(\mathbb{F}_q^*)})$.

\begin{thm} \label{eqn:thm3.4}
Let $f$ be a $p$-ary function with $f(0)=0$ and let $I=f(\mathbb{F}_q^*)$. Then the following statements are equivalent.

$(i)$ The partition $P=\{D^*_{f,i}\}_{ i\in I}$ of $\mathbb{F}^*_q$ induces an $|I|$-class association scheme $(\mathbb{F}_q,\{R_i\}_{ i\in I})$ defined in (\ref{eqn:(3.1)});

$(ii)$ The sizes of $I$ and $\{(W_{f}(\beta), W_{f}(2\beta), \ldots, W_{f}((p-1)\beta)):\beta\in \Bbb F^*_q\}$ are same.
\end{thm}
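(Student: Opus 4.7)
The plan is to invoke Theorem~\ref{eqn:thm2.5} applied to the partition $P=\{P_0,\{P_i\}_{i\in I}\}$ of $\Bbb F_q$ defined by $P_0=\{0\}$ and $P_i=D^*_{f,i}$ for $i\in I$. Since the blocks $D^*_{f,i}$ $(i\in I)$ are nonempty and pairwise disjoint, we have $|P|=|I|+1$. By Theorem~\ref{eqn:thm2.5}, the partition $P$ induces a symmetric $|I|$-class association scheme on $\Bbb F_q$ exactly when the dual partition $\widehat{P}=\{Q_j\}_{j=0}^{d'}$ (with $Q_0=\{0\}$) satisfies $d'=|I|$, i.e. $|\widehat{P}|=|I|+1$. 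So the whole task reduces to identifying $|\widehat{P}|$ in terms of the Walsh spectrum.

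Next, I would describe the equivalence $\sim_{\widehat{P}}$ on $\Bbb F_q^*$. Because $f(0)=0$, we have $0\in D_{f,0}$ only, so for every $\beta\in\Bbb F_q$,
\[
\chi_\beta(D^*_{f,0})=\chi_\beta(D_{f,0})-1\qquad\text{and}\qquad \chi_\beta(D^*_{f,i})=\chi_\beta(D_{f,i})\text{ for }i\in I\setminus\{0\}.
\]
Therefore, for $\beta,\beta'\in\Bbb F_q^*$, the condition $\chi_\beta(P_i)=\chi_{\beta'}(P_i)$ for all $i$ (the defining condition for $\widehat P$) is equivalent to $\chi_\beta(D_{f,i})=\chi_{\beta'}(D_{f,i})$ for every $i\in I$. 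By Proposition~\ref{eqn:prop3.3}, this in turn is equivalent to $W_f(z\beta)=W_f(z\beta')$ for all $z\in\Bbb F^*_p$, i.e. the tuples $(W_f(\beta),W_f(2\beta),\ldots,W_f((p-1)\beta))$ and $(W_f(\beta'),W_f(2\beta'),\ldots,W_f((p-1)\beta'))$ coincide.

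Before counting, I would check that $\{0\}$ really is its own equivalence class in $\widehat{P}$, i.e. no $\beta\in\Bbb F_q^*$ lies in the class of $0$. If some $\beta\neq 0$ satisfied $\chi_\beta(D^*_{f,i})=|D^*_{f,i}|$ for every $i\in I$, then summing over $i\in I$ would give $\chi_\beta(\Bbb F_q^*)=|\Bbb F_q^*|$, i.e. $-1=q-1$, a contradiction. Hence $Q_0=\{0\}$ and the remaining classes of $\widehat P$ correspond bijectively to the distinct $(p-1)$-tuples $(W_f(z\beta))_{z\in\Bbb F^*_p}$ as $\beta$ ranges over $\Bbb F_q^*$. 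Consequently,
\[
|\widehat{P}|=1+\bigl|\{(W_f(\beta),W_f(2\beta),\ldots,W_f((p-1)\beta)):\beta\in\Bbb F_q^*\}\bigr|.
\]

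Combining this with $|P|=|I|+1$ and Theorem~\ref{eqn:thm2.5}, statement $(i)$ holds if and only if the displayed quantity equals $|I|+1$, which is precisely $(ii)$. I do not anticipate a real obstacle: Proposition~\ref{eqn:prop3.3} and Theorem~\ref{eqn:thm2.5} supply all the structural content, and the only delicate point is bookkeeping for the block $\{0\}$ on both sides (the subtraction of $1$ in $\chi_\beta(D^*_{f,0})$ and the verification that $\{0\}$ remains a singleton class of $\widehat{P}$), which is handled above.
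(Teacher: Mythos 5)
Your proposal is correct and follows essentially the same route as the paper: reduce to Theorem~\ref{eqn:thm2.5} by comparing $|P|$ with $|\widehat{P}|$, and use Proposition~\ref{eqn:prop3.3} to identify the dual-partition classes with the distinct Walsh tuples $(W_f(z\beta))_{z\in\mathbb{F}_p^*}$. You actually supply more detail than the paper does (the bookkeeping for the block $\{0\}$ and the verification that it is a singleton class of $\widehat{P}$), but the underlying argument is identical.
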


\begin{proof}

Let $V(\beta)=(W_{f}(\beta), W_{f}(2\beta), \ldots, W_{f}((p-1)\beta))$ be a vector. By Proposition \ref{eqn:prop3.3}, we have $$\chi_{\beta}(D_{f,i})=\chi_{\beta'}(D_{f,i})\text{ for all } i\in f(\mathbb{F}_p)\iff V(\beta)=V(\beta')$$
for two distinct elements $\beta$ and $\beta'\in \Bbb F_q^*$.
Then the result follows from Theorem \ref{eqn:thm2.5}.
\end{proof}




We give the following example for illustration of Theorem \ref{eqn:thm3.4}.

\begin{exa}

Let $p=3$, $q=27$, and $f(x)=\Tr(2x-x^5)$.  Using Magma,  we get the following facts:

(1) The set of the Walsh-Hadamard values of $f$ at nonzero elements is $\{0, 9, -9\}$.



(2) The set $\{(W_f(\beta), W_f(2\beta)): \beta\in \Bbb F_{27}^*\}$ is equal to $\{(-9,0), (0,-9), (0,0), (0,9), (9,0)\}$.

By Theorem \ref{eqn:thm3.4}, we can see that the partition
$\{ D^*_{f,i}\}_{i\in f(\Bbb F_q^*)}$ of $\mathbb{F}^*_q$ cannot induce a $3$-class association scheme.

\end{exa}

\section{Applications of a main result}

In this section we find various applications of the main result in Section 3.
We first discuss a characterization of $p$-ary bent functions in terms of association schemes in the subsection~\ref{subsection4.1}.
In the subsection~\ref{subsection4.2},
we use $p$-ary functions satisfying that $f(ax)=f(x)$ for all $(a,x)\in\mathbb{F}^*_p\times\mathbb{F}_q$ for construction of infinite families of a few-class association schemes.
Especially, we use some specific $p$-ary functions studied in \cite{WYL}.
Furthermore, we find four classes of $p$-ary two-weight linear codes,
which are constructed from the association schemes produced in the subsection~\ref{subsection4.2}.

\subsection{Characterization of $p$-ary bent functions}$~$\label{subsection4.1}


\begin{lem}\label{eqn:lem4.2}
Let $f$ be a $p$-ary bent function with $f(0)=0$ and let $g$ be the associated $p$-ary function of $f$. Then following statements are true.

$(i)$ $f$ is surjective;

$(ii)$ the Walsh-Hadamard transform of $f$ can be written as
\begin{align*}
W_f(\beta)=\mu(\beta)(p^*)^{\frac{m}{2}}\zeta^{g(\beta)}_p,
\end{align*}
where $\mu(\beta)=\pm1$ for all $\beta\in\mathbb{F}_q$ and $p^*=(-1)^{(p-1)/2}p$ and $g$ is the associated $p$-ary function of $f$;

$(iii)$ if $m$ is even, then \[
\sum_{\beta\in D_{g,i}}\mu(\beta)-\sum_{\beta\in D_{g,0}}\mu(\beta)=-p^{\frac{m}{2}}\text{ for all } i\in\mathbb{F}^*_p
\]
and if $m$ is odd, then
\begin{align*}
\sum_{\beta\in D_{g,i}}\mu(\beta)-\sum_{\beta\in D_{g,0}}\mu(\beta)=
\left\{
\begin{array}{c l}
p^{\frac{m-1}{2}} & \mbox{ if } i\in S,\\
-p^{\frac{m-1}{2}} & \mbox{ if } i\in NS,
\end{array}\right.
\end{align*}
where $S$ (resp., $NS$) is the set of squares (resp., non-squares) in $\mathbb{F}^*_p$.
\end{lem}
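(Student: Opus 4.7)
The plan is to prove (ii) as a reformulation of (1) with a fixed convention for $(p^*)^{m/2}$, then to deduce (i) and (iii) by evaluating (ii) (respectively, the identity $\sum_\beta W_f(\beta)=q$) at $\beta=0$ and solving a linear system in $\mathbb{Z}[\zeta_p]$. The main obstacle is the sign bookkeeping: a single definition of $(p^*)^{m/2}$ must simultaneously reconcile the four sub-cases of (1)---even versus odd $m$, combined with $p\equiv 1$ versus $p\equiv 3\pmod 4$---and deliver the correct sign in the closed-form value of $p^m/(p^*)^{m/2}$ used in (iii). I would resolve this by setting $\sqrt{p^*}:=\sum_{a\in\mathbb{F}_p}\eta(a)\zeta_p^a$ (the quadratic Gauss sum, which is real for $p\equiv 1\pmod 4$ and purely imaginary for $p\equiv 3\pmod 4$) and defining $(p^*)^{m/2}:=(\sqrt{p^*})^m$. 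A case-by-case comparison with (1) then shows that $W_f(\beta)/\bigl((p^*)^{m/2}\zeta_p^{g(\beta)}\bigr)$ is always $\pm 1$; this is precisely the definition of $\mu(\beta)$, and (ii) follows.

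For (i), specialize (ii) at $\beta=0$ and equate it to $W_f(0)=\sum_{i\in\mathbb{F}_p}|D_{f,i}|\zeta_p^i$ inside $\mathbb{Z}[\zeta_p]$. Because the only $\mathbb{Z}$-linear relation among $1,\zeta_p,\dots,\zeta_p^{p-1}$ is $1+\zeta_p+\cdots+\zeta_p^{p-1}=0$, this equality determines each $|D_{f,i}|$ up to a common additive constant, which is pinned down by the mass equation $\sum_i|D_{f,i}|=p^m$. In the case $m$ even, one finds $|D_{f,i}|=p^{m-1}\mp p^{m/2-1}$ for $i\neq g(0)$ and $|D_{f,g(0)}|=p^{m-1}\pm (p-1)p^{m/2-1}$; in the case $m$ odd, expanding $\sqrt{p^*}$ as a Gauss sum gives $|D_{f,g(0)}|=p^{m-1}$ and $|D_{f,i}|=p^{m-1}\pm p^{(m-1)/2}\eta(i-g(0))$ for $i\neq g(0)$. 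All of these are strictly positive, so $f$ attains every value in $\mathbb{F}_p$.

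For (iii), start from $\sum_\beta W_f(\beta)=p^m$ (either from the inverse Walsh--Hadamard identity at $\beta=0$, or directly from additive character orthogonality) and substitute (ii) to obtain
\begin{align*}
(p^*)^{m/2}\sum_{i\in\mathbb{F}_p}S_i\,\zeta_p^i=p^m,\qquad S_i:=\sum_{\beta\in D_{g,i}}\mu(\beta),
\end{align*}
so $\sum_i S_i\zeta_p^i=p^m/(p^*)^{m/2}$. If $m$ is even, the right-hand side is the rational integer $p^{m/2}$; reducing the left-hand side to the $\mathbb{Z}$-basis $\{1,\zeta_p,\dots,\zeta_p^{p-2}\}$ via $\zeta_p^{p-1}=-1-\zeta_p-\cdots-\zeta_p^{p-2}$ forces $S_1=\cdots=S_{p-1}$ and $S_0-S_{p-1}=p^{m/2}$, which yields $S_i-S_0=-p^{m/2}$ for every $i\in\mathbb{F}_p^*$. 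If $m$ is odd, $p^m/(p^*)^{m/2}=p^{(m-1)/2}\sqrt{p^*}=\sum_{a\in\mathbb{F}_p^*}p^{(m-1)/2}\eta(a)\zeta_p^a$, and the same basis reduction returns $S_i-S_0=p^{(m-1)/2}\eta(i)$ for $i\in\mathbb{F}_p^*$, i.e.\ $+p^{(m-1)/2}$ on squares and $-p^{(m-1)/2}$ on non-squares, which completes the proof.
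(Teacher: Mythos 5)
Your overall architecture matches the paper's for parts (ii) and (iii): both proofs derive (iii) by summing $W_f$ over $\mathbb{F}_q$, substituting the form from (ii), and comparing coefficients against the single $\mathbb{Z}$-linear relation $1+\zeta_p+\cdots+\zeta_p^{p-1}=0$. For (i) you take a genuinely different route: the paper argues by contradiction from Lemma 3.1 via the bound $\bigl|\sum_{y\in\mathbb{F}_p^*}(\zeta_p^{-i}W_f(0))^{\sigma_y}\bigr|\le(p-1)\sqrt q<q$, whereas you extract the full value distribution $|D_{f,i}|$ from $W_f(0)$, which is more informative. (Both versions silently need $m\ge 2$: the paper's inequality $(p-1)\sqrt q<q$ fails at $m=1$, and your own formula $|D_{f,i}|=p^{m-1}\pm p^{(m-1)/2}\eta(i-g(0))$ takes the value $0$ at $m=1$ --- correctly so, since $f(x)=x^2$ on $\mathbb{F}_p$ is bent with $f(0)=0$ but not surjective. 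So your claim that all these quantities are strictly positive is false precisely at $m=1$.)

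The genuine gap sits exactly at the point you identified as the crux: the sign of $p^m/(p^*)^{m/2}$ under your normalization $(p^*)^{m/2}:=(\sqrt{p^*})^m$ with $\sqrt{p^*}$ the Gauss sum. For $m$ even, $(\sqrt{p^*})^m=(-1)^{(p-1)m/4}p^{m/2}$, so $p^m/(p^*)^{m/2}=-p^{m/2}$ whenever $p\equiv 3\pmod 4$ and $m\equiv 2\pmod 4$, not $+p^{m/2}$ as you assert; likewise for $m$ odd, $p^m/(\sqrt{p^*})^m=(-1)^{(p-1)(m+1)/4}p^{(m-1)/2}\sqrt{p^*}$, which is the negative of your claimed value when $p\equiv 3\pmod 4$ and $m\equiv 1\pmod 4$. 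In those subcases your (otherwise correct) basis reduction returns $S_i-S_0$ with the sign opposite to the statement, where $S_i=\sum_{\beta\in D_{g,i}}\mu(\beta)$. Concretely, take $p=3$, $m=2$, $f(x)=\Tr(x^2)$ on $\mathbb{F}_9$: here $W_f(\beta)=3\zeta_3^{g(\beta)}$ for all $\beta$, while $(\sqrt{-3})^2=-3$, so your convention forces $\mu\equiv-1$; one computes $|D_{g,0}|=5$ and $|D_{g,1}|=|D_{g,2}|=2$, hence $S_i-S_0=+3=+p^{m/2}$ rather than $-p^{m/2}$. The repair is local: adopt the convention the paper's own computation implicitly uses, namely $(p^*)^{m/2}:=p^{m/2}$ for $m$ even and $(p^*)^{m/2}:=p^{(m-1)/2}\,\overline{\sqrt{p^*}}$ for $m$ odd (equivalently, the square root of $(p^*)^m$ for which $p^m/(p^*)^{m/2}$ equals $p^{m/2}$, resp.\ $p^{(m-1)/2}\sqrt{p^*}$); with that choice the remainder of your argument goes through verbatim. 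It is only fair to add that the paper's statement of (ii) leaves $(p^*)^{m/2}$ undefined and its proof of (iii) quietly switches normalizations, so the ambiguity you stumbled on is inherited from the source --- but your specific claimed evaluations of $p^m/(p^*)^{m/2}$ are false in two of the four subcases, and the conclusion of (iii) is sign-sensitive, so as written the proof does not close.
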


\begin{proof}
$(i)$ Assume that there exists $i\in \Bbb F_p$ such that $D_{f,i}=\emptyset.$ By Lemma 3.1, we have $N_{i,0}^f(0)=0$,  $|D_{f,i}|=0$, and $\sum_{y\in\mathbb{F}^*_p}\left(\zeta_p^{-i}W_f(0)\right)^{\sigma_{y}}=-q.$ Note that $|W_f(0)|=\sqrt{q}$; therefore, we have $$\bigg|\sum_{y\in\mathbb{F}^*_p}\left(\zeta_p^{-i}W_f(0)\right)^{\sigma_{y}}\bigg|\le (p-1) \sqrt{q}<q,$$ which is a contradiction. This shows that $f$ is surjective.


$(ii)$ This is a direct consequence of Eq. (2.1).

$(iii)$ Case $1$: $m$ is even.

By using $1=-\sum_{i=1}^{p-1}\zeta^i_p$ and $q\zeta^{f(0)}_p=\sum_{\beta\in\mathbb{F}_q}W_f(\beta)$ for any $p$-ary function $f$, we have
\begin{multline*}
-\sum_{i=1}^{p-1}p^m\zeta^i_p=p^m=\sum_{\beta\in\mathbb{F}_q}W_f(\beta)=p^{\frac{m}{2}}\sum_{\beta\in\mathbb{F}_q}\mu(\beta)\zeta^{g(\beta)}_p\\
=p^{\frac{m}{2}}\sum_{i=0}^{p-1}\sum_{\beta\in D_{g,i}}\mu(\beta)\zeta^i_p=p^{\frac{m}{2}}\sum_{i=1}^{p-1}\left(\sum_{\beta\in D_{g,i}}\mu(\beta)-\sum_{\beta\in D_{g,0}}\mu(\beta)\right)\zeta^i_p.
\end{multline*}
The result follows from the fact that $\{\zeta_p,\ldots,\zeta^{p-1}_p\}$ is a basis for $\mathbb{Q}(\zeta_p)$ over $\mathbb{Q}$.

Case $2$: $m$ is odd.

We note that
$\sum_{i=0}^{p-1}\zeta_p^{i^2}=\sqrt{p^*},$ where $p^*=(-1)^{(p-1)/2}p$.

We then have
\[
p^m=\sum_{\beta\in\mathbb{F}_q}W_f(\beta)=(p^*)^{\frac{m}{2}}\sum_{\beta\in\mathbb{F}_q}\mu(\beta)\zeta^{g(\beta)}_p;
\]
thus we get
\[
p^{\frac{m-1}{2}}\sum_{i=0}^{p-1}\zeta_p^{i^2}=p^m(p^*)^{-\frac{m}{2}}=\sum_{i=0}^{p-1}\sum_{\beta\in D_{g,i}}\mu(\beta)\zeta_p^{j}=\sum_{i=1}^{p-1}\left(\sum_{\beta\in D_{g,i}}\mu(\beta)-\sum_{\beta\in D_{g,0}}\mu(\beta)\right)\zeta_p^{j}.
\]
We can see that $\sum_{i=0}^{p-1}\zeta_p^{i^2}=1+2\sum_{i\in S}\zeta_p^{i}=
-\sum_{i=1}^{p-1}\zeta^i_p+2\sum_{i\in S}\zeta_p^{i}=\sum_{i\in S}\zeta_p^{i}-\sum_{i\in NS}\zeta_p^{i}$; this implies that
\[
p^{\frac{m-1}{2}}\left(\sum_{i\in S}\zeta_p^{i}-\sum_{i\in NS}\zeta_p^{i}\right)
=\sum_{i=1}^{p-1}\left(\sum_{\beta\in D_{g,i}}\mu(\beta)-\sum_{\beta\in D_{g,0}}\mu(\beta)\right)\zeta_p^{j},
\]
and the result follows.
\end{proof}

\begin{thm}  Let $p$ be an odd prime number, and let $f$ be a  $p$-ary bent function with $f(0)=0$.  The partition $P=\{D^*_{f,i}\}_{ i\in f(\mathbb{F}_q^*)}$ induces a $p$-class association scheme $(\mathbb{F}_q,\{R_i\}_{ i\in f(\mathbb{F}_q^*)})$ defined in Equation (\ref{eqn:(3.1)}) if and only if $f$ is weakly regular.
\end{thm}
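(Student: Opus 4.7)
The plan is to reduce the statement to a counting question via Theorem~\ref{eqn:thm3.4} and then analyze it using the structure of bent functions captured in Lemma~\ref{eqn:lem4.2}. First, by Lemma~\ref{eqn:lem4.2}(ii), write $W_f(\beta)=\mu(\beta)(p^*)^{m/2}\zeta_p^{g(\beta)}$ with $\mu(\beta)\in\{\pm 1\}$. Since for odd $p$ the complex numbers in $\{\pm\zeta_p^a:a\in\mathbb{F}_p\}$ are pairwise distinct, one sees that $f$ is weakly regular precisely when $\mu$ is constant on $\mathbb{F}_q$, and that $V(\beta)=V(\beta')$ is equivalent to the two equalities $\mu(z\beta)=\mu(z\beta')$ and $g(z\beta)=g(z\beta')$ holding for every $z\in\mathbb{F}_p^*$. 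A short Walsh-spectrum computation combined with Lemma~\ref{eqn:lem4.2}(i) gives $|D_{f,0}|\ge 2$, so $|f(\mathbb{F}_q^*)|=p$; Theorem~\ref{eqn:thm3.4} then reformulates the conclusion as: the image of $V$ on $\mathbb{F}_q^*$ has size $p$ if and only if $\mu$ is constant.

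For the direction $(\Leftarrow)$, which is essentially known, if $\mu$ is constant the condition $V(\beta)=V(\beta')$ reduces to $g(z\beta)=g(z\beta')$ for every $z\in\mathbb{F}_p^*$; the dual $g$ of a weakly regular bent function satisfies a homogeneity under the $\mathbb{F}_p^*$-scaling on $\mathbb{F}_q$, which makes the tuple $(g(z\beta))_z$ a function of $g(\beta)$ alone, so its image has exactly $|g(\mathbb{F}_q^*)|=p$ elements. Equivalently, one may invoke the theorem of Pott, Tan, Feng and Liao~\cite{PTFL} directly to dispose of this direction.

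For the direction $(\Rightarrow)$, which is the new content, assume that the partition induces a $p$-class scheme, so by Theorem~\ref{eqn:thm3.4} we have $|\{V(\beta):\beta\in\mathbb{F}_q^*\}|=p$, and the task is to show $\mu$ is constant. Arguing by contradiction, suppose both $S_+=\mu^{-1}(+1)\cap\mathbb{F}_q^*$ and $S_-=\mu^{-1}(-1)\cap\mathbb{F}_q^*$ are nonempty. Two structural inputs would drive the contradiction: first, the $\mathbb{F}_p^*$-equivariance $V(z\beta)=\pi_z V(\beta)$, where $\pi_z$ is the coordinate permutation $z'\mapsto zz'$, which implies that the sign multi-set among the coordinates of $V(\beta)$ is constant on each $\mathbb{F}_p^*$-orbit in the image $V(\mathbb{F}_q^*)$ and hence that this $p$-element image splits according to sign multi-set; second, the identity in Lemma~\ref{eqn:lem4.2}(iii), which forces $\sum_{\beta\in D_{g,i}}\mu(\beta)-\sum_{\beta\in D_{g,0}}\mu(\beta)$ to take a specific nonzero value depending only on $m$ and, when $m$ is odd, the quadratic character of $i$. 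Combining these, one counts separately the $V$-vectors produced from $S_+$ and from $S_-$ and shows the total strictly exceeds $p$, the desired contradiction.

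The main obstacle is making this combinatorial-counting step precise. The $\mathbb{F}_p^*$-orbit sizes on $V(\mathbb{F}_q^*)$ divide $p-1$ and sum to $p$, which leaves several possible orbit decompositions to rule out, and one must show that in every such arrangement a nontrivial sign mixing of $\mu$, together with the constraint from Lemma~\ref{eqn:lem4.2}(iii) on how $\mu$ can vary across the level sets of $g$, produces extra distinct $V$-vectors beyond the allotted $p$. I expect the delicate parity and divisibility bookkeeping in this step to be the technically hardest part of the proof.
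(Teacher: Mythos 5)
Your overall architecture is the same as the paper's: reduce to counting the tuples $V(\beta)$ via Theorem~\ref{eqn:thm3.4}, write $W_f(\beta)=\mu(\beta)(p^*)^{m/2}\zeta_p^{g(\beta)}$ as in Lemma~\ref{eqn:lem4.2}(ii), identify weak regularity with constancy of the sign $\mu$, and use Lemma~\ref{eqn:lem4.2}(iii) to constrain how $\mu$ distributes over the level sets of $g$. The problem is that in the direction $(\Rightarrow)$ --- the new content of the theorem --- you stop exactly where the proof has to happen: you announce an orbit-counting scheme on the image of $V$ and then concede that making the ``parity and divisibility bookkeeping'' precise is the hardest part. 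That step is the proof, and the mechanism you sketch is not the one that closes it. The paper does no orbit bookkeeping at all: it first notes (via Proposition~\ref{eqn:prop3.3}) that each block of the dual partition lies entirely in $A_+=\mu^{-1}(1)$ or in $A_-=\mu^{-1}(-1)$, fixes the sign of $\mu$ on $D_{g,0}$, and then converts the identities of Lemma~\ref{eqn:lem4.2}(iii) into exact cardinality equations --- for instance $|D_{g,i}|+|D_{g,0}|=p^{m/2}$ in the ``all opposite sign'' subcase, which summed over $i$ gives $(p-2)|D_{g,0}|=(p-1)p^{m/2}-q<0$; and in the mixed subcase $|D_{g,0}|=p^{m/2}$ forcing $|D_{g,i}|=0$, contradicting surjectivity from Lemma~\ref{eqn:lem4.2}(i). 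You would need to supply an argument of this kind; as written, the forward implication is a plan, not a proof.

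The $(\Leftarrow)$ direction also has a genuine gap: it rests on the claim that the dual $g$ of an \emph{arbitrary} weakly regular bent function is homogeneous under the scaling action of $\mathbb{F}_p^*$, so that $(g(z\beta))_{z\in\mathbb{F}_p^*}$ is a function of $g(\beta)$ alone. That is not a property of general weakly regular bent functions. For example, $f(x)=\operatorname{Tr}(x^2+x)$ on $\mathbb{F}_9$ is regular bent with $f(0)=0$ and dual $\beta\mapsto-\operatorname{Tr}((1-\beta)^2)$, and the pair $(g(\beta),g(2\beta))$ takes five values on $\mathbb{F}_9^*$ rather than three; here $D^*_{f,0}=\{2\}$ is not even symmetric under negation. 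The fallback of citing \cite{PTFL} does not rescue this, since that theorem assumes the extra hypothesis $f(ax)=a^lf(x)$ with $\gcd(l-1,p-1)=1$, and the paper's Remark 4.3 insists that Theorem 4.2 is strictly more general. (This example in fact indicates that the unrestricted $(\Leftarrow)$ statement is more delicate than either your sketch or the paper's one-line argument --- which deduces $|P|=|\widehat P|$ from surjectivity of $\tilde f$ alone --- acknowledges; some hypothesis of the homogeneity or evenness type appears to be genuinely needed, so this is precisely the point where the argument cannot be waved through.)
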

\begin{proof}
Let us put $A_+=\{\beta\in\mathbb{F}_q:\mu(\beta)=1\}$ and $A_-=\{\beta\in\mathbb{F}_q:\mu(\beta)=-1\}$.

$(\Rightarrow)$
By Lemma \ref{eqn:lem4.2}, we have $|{f(\mathbb{F}_q)}|=p$.
Let $\widehat{P}=\{Q_j\}_{j=0}^{p-1}$ be the dual partition of $P$.
We have $Q_{-1}=\{0\}$.
By Proposition \ref{eqn:prop3.3}, we get that $\beta$ and $\beta'$ in $\mathbb{F}_q$ are in the same block of $\widehat{P}$ $\iff$ $W_f(z\beta)=W_f(z\beta')$ for all $z\in\mathbb{F}^*_p$ $\iff$ $g(z\beta)=g(z\beta')$ and $\mu(z\beta)=\mu(z\beta')$ for all $z\in\mathbb{F}^*_p$. This shows that
\begin{align}\label{eqn:4.1}
   \text{either }Q_j\subseteq A_+ \; \text{ or } \; Q_j\subseteq A_-\text{ for } j\in\mathbb{F}_p.
\end{align}

Case 1: $m$ is even. We divide this case into two subcases as follows.

Case 1-1: assume $D_{g,0}\subseteq A_+$. By Lemma \ref{eqn:lem4.2}, we have
\begin{align}\label{eqn:4.2}
\sum_{\beta\in D_{g,i}}\mu(\beta)=|D_{g,0}|-p^{\frac{m}{2}} \text{ for  all } i\in\mathbb{F}^*_p.
\end{align}
We claim that $D_{g,i}\subseteq A_+$ for all $i\in\mathbb{F}^*_p$.
Assume not, that is, $D_{g,i}$ is not contained in $ A_+$ for some $i\in\mathbb{F}^*_p$, for contradiction.
Then by Eq. (\ref{eqn:4.1}), it is enough to consider the following two cases:

$(i)$ $D_{g,i}\subseteq A_-$ for all $i\in\mathbb{F}^*_p$, or

$(ii)$ $Q_j\subseteq A_+$ and $Q_l\subseteq A_-$ for some $j,l\in\mathbb{F}^*_p$ with $j\neq l$.

Assume that $(i)$ is true. By Eq. (\ref{eqn:4.2}), we have $|D_{g,i}|+|D_{g,0}|=p^{\frac{m}{2}}$ for all $i\in\mathbb{F}^*_p$. Adding all these equations together with $|D_{g,0}|+|D_{g,0}|=2|D_{g,0}|$, we have that $q+p|D_{g,0}|=(p-1)p^{\frac{m}{2}}+2|D_{g,0}|$ \; or \; $(p-2)|D_{g,0}|=-q+(p-1)p^{\frac{m}{2}}<0$,
which is impossible.\\
Assume that $(ii)$ is true. By Eq. (\ref{eqn:4.2}), we have that $|D_{g,j}|=|D_{g,0}|-p^{\frac{m}{2}}\geq0$ and $|D_{g,l}|=-|D_{g,0}|+p^{\frac{m}{2}}\geq0$; thus, $|D_{g,0}|=p^{\frac{m}{2}}$.
By Eq. $(\ref{eqn:4.2})$, we have $\sum_{\beta\in D_{g,i}}\mu(\beta)=0$ for all $i\in\mathbb{F}^*_p$, and so $|D_{g,i}|=0$ for all $\mathbb{F}^*_p$ by Eq. (\ref{eqn:4.1}), which is impossible because $p^{\frac{m}{2}}=\sum_{i\in\mathbb{F}_p}|D_{g,i}|=q$.
This proves our claim; therefore, $f$ is weakly regular bent.

Case 1-2: assume $D_{g,0}\subseteq A_-$. By Lemma \ref{eqn:lem4.2}, we have
\begin{align}\label{eqn:4.3}
\sum_{\beta\in D_{g,i}}\mu(\beta)=-|D_{g,0}|-p^{\frac{m}{2}}<0 \text{ for all } i\in\mathbb{F}^*_p.
\end{align}
By (\ref{eqn:4.1}) and (\ref{eqn:4.3}), we have $D_{g,i}\subseteq A_{-}$ for all $i\in\mathbb{F}^*_p$, and thus $f$ is weakly regular.

Case 2: $m$ is odd. We also divide this case into two subcases as follows.

Case 2-1: assume $D_{g,0}\subseteq A_+$. By Lemma \ref{eqn:lem4.2}, we have
\begin{align}\label{eqn:4.4}
\sum_{\beta\in D_{g,i}}\mu(\beta)=
\left\{
\begin{array}{c l}
|D_{g,0}|+p^{\frac{m-1}{2}} & \mbox{ if } i\in S,\\
|D_{g,0}|-p^{\frac{m-1}{2}} & \mbox{ if } i\in NS,
\end{array}\right.
\end{align}
We claim that $D_{g,i}\subseteq A_+$ for all $i\in\mathbb{F}^*_p$.
Assume not, that is, $D_{g,i}$ is not contained in $ A_+$ for some $i\in\mathbb{F}^*_p$, for contradiction.
Since $D_{g,i}\subseteq A_+$ for $i\in S$ by Eqs. (\ref{eqn:4.1}) and (4.4),
it is sufficient to consider the following two cases:

$(i)$ $D_{g,i}\subseteq A_-$ for all $i\in NS$, or

$(ii)$ $Q_j\subseteq A_+$ and $Q_l\subseteq A_-$ for some $j,l\in NS$ with $j\neq l$.

Assume that $(i)$ is true. By Eq. (4.4), we have $|D_{g,i}|=|D_{g,0}|+p^{\frac{m-1}{2}}$
for $i\in S$ and $|D_{g,i}|=-|D_{g,0}|+p^{\frac{m-1}{2}}$ for $i\in NS$.
Adding all these equations together with $|D_{g,0}|=|D_{g,0}|$ leads to $q=|D_{g,0}|+2\frac{p-1}{2}p\frac{m-1}{2}$,
and so $|D_{g,0}|=q-(p-1)p\frac{m-1}{2}$. We compute $|D_{g,i}|$ for $i\in NS$ as follows: $|D_{g,i}|=-|D_{g,0}|+p^{\frac{m-1}{2}}=-q+p^{\frac{m+1}{2}}<0$, which is impossible.

Assume that $(ii)$ is true. By the same argument as Case 1-1 $(ii)$,
we have a contradiction. This proves that $f$ is weakly regular bent.

Case 2-2: assume $D_{g,0}\subseteq A_-$.
By Lemma \ref{eqn:lem4.2}, we have
\begin{align}\label{eqn:4.5}
\sum_{\beta\in D_{g,i}}\mu(\beta)=
\left\{
\begin{array}{c l}
-|D_{g,0}|+p^{\frac{m-1}{2}} & \mbox{ if } i\in S,\\
-|D_{g,0}|-p^{\frac{m-1}{2}} & \mbox{ if } i\in NS.
\end{array}\right.
\end{align}
To show that $D_{g,i}\subseteq A_-$ for all $i\in\mathbb{F}^*_p$,
assume not, for contradiction.
Since $D_{g,i}\subseteq A_-$ for $i\in NS$ by Eqs. (\ref{eqn:4.1}) and $(4.5)$,
it is enough to consider the following two cases:

$(i)$ $D_{g,i}\subseteq A_+$ for all $i\in S$, or

$(ii)$ $Q_j\subseteq  A_+$ and $Q_l\subseteq A_-$ for some $j,l\in S$ with $j\neq l$.

By the same argument as Case 2-1, we also get a contradiction.
This proves that $f$ is weakly regular bent.

$(\Leftarrow)$
Since $f$ is weakly regular bent, we have $W_f (\beta) = \epsilon\eta(-1)^{m/2} p^{\frac{m}{2}}\zeta^{\tilde{f}(\beta)}_p$ and
$\tilde{f}$ is also weakly regular bent. Then $\tilde{f}$ is surjective by Lemma \ref{eqn:lem4.2}; thus, $|P|=|\widehat{P}|$ by Proposition \ref{eqn:prop3.3}. The result follows from Theorem  \ref{eqn:thm3.4}.
\end{proof}

\begin{rem}{\rm
In \cite[Theorem 3]{PTFL}, they proved that if $f$ is a
weakly regular $p$-ary bent function satisfying that $f(ax)=a^l f(x)$
for all $a\in\mathbb{F}^*_p, x\in\mathbb{F}_q$ with $(l-1,p-1)=1$ and $f(0)=0$, then the partition $P=\{D^*_{f,i}\}_{i=0}^{p-1}$ of $\mathbb{F}^*_q$ induces a $p$-class association scheme $(\mathbb{F}_q,\{R_i\}_{ i=0}^{p-1})$, which is a special case of Theorem 4.2.
}
\end{rem}

\subsection{Construction of  association schemes}$~$\label{subsection4.2}

\begin{lem}\label{eqn:lem4.1}
 Let $f$ be a $p$-ary function. Then the following statements are equivalent.

 $(i)$ $f(ax)=f(x)$ for all $(a,x)\in\mathbb{F}^*_p\times\mathbb{F}_q$;

 $(ii)$ $W_f(a\beta)=W_f(\beta)$ for all $(a,\beta)\in\mathbb{F}^*_p\times\mathbb{F}_q$;

 $(iii)$ $a D_{f,i}=D_{f,i}$ for all $(a,i)\in\mathbb{F}^*_p\times\mathbb{F}_p$;

 $(iv)$ $\chi_{\beta}(D_{f,i})$ are rational integers for all $(i,\beta)\in\mathbb{F}_p\times\mathbb{F}_q$.
 \end{lem}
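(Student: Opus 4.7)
The plan is to establish four implications that together deliver all six equivalences: $(i)\Leftrightarrow(iii)$ as an immediate reformulation, $(i)\Leftrightarrow(ii)$ via the Walsh--Hadamard transform and its inverse, and $(iii)\Leftrightarrow(iv)$ via the Galois action on $\mathbb{Q}(\zeta_p)$ combined with Fourier inversion on $\mathbb{F}_q$. The main obstacle is really only a book-keeping one: matching multiplication by $c\in\mathbb{F}_p^*$ on the set $D_{f,i}$ with the Galois action $\sigma_c$ on the associated character sum, and then invoking Fourier inversion to pass from equality of Fourier transforms to equality of sets.

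The equivalence $(i)\Leftrightarrow(iii)$ is immediate from $D_{f,i}=f^{-1}(i)$: the invariance $f(ax)=f(x)$ says exactly that multiplication by $a$ permutes each level set. For $(i)\Rightarrow(ii)$ I would substitute $y=ax$ in
\[
W_f(a\beta)=\sum_{x\in\mathbb{F}_q}\zeta_p^{f(x)-\Tr(a\beta x)}
\]
and apply $f(a^{-1}y)=f(y)$ to collapse the sum to $W_f(\beta)$. The converse $(ii)\Rightarrow(i)$ runs the same substitution inside the inverse Walsh--Hadamard identity
\[
\zeta_p^{f(a\beta)}=p^{-m}\sum_{x\in\mathbb{F}_q}W_f(x)\zeta_p^{\Tr(a\beta x)}
\]
recalled in Section~2; the right-hand side collapses to $\zeta_p^{f(\beta)}$ by $(ii)$, which forces $f(a\beta)=f(\beta)$.

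The bridge between the set-theoretic and the arithmetic conditions is the key identity
\[
\sigma_c\bigl(\chi_\beta(D_{f,i})\bigr)=\chi_{c\beta}(D_{f,i})=\chi_\beta(cD_{f,i}),\qquad c\in\mathbb{F}_p^*,
\]
where the first equality follows from $\sigma_c(\zeta_p)=\zeta_p^c$ and the second from the substitution $\gamma=c\alpha$ inside the character sum. Assuming $(iii)$, this identity reduces to $\sigma_c(\chi_\beta(D_{f,i}))=\chi_\beta(D_{f,i})$, so each $\chi_\beta(D_{f,i})$ is fixed by $\Gal(\mathbb{Q}(\zeta_p)/\mathbb{Q})$; being an algebraic integer in $\mathbb{Z}[\zeta_p]\cap\mathbb{Q}=\mathbb{Z}$, it lies in $\mathbb{Z}$, giving $(iv)$. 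Conversely, assuming $(iv)$, the same identity forces $\chi_\beta(cD_{f,i})=\chi_\beta(D_{f,i})$ for every $\beta\in\mathbb{F}_q$, so the indicator functions of $cD_{f,i}$ and $D_{f,i}$ have identical Fourier transforms on $\mathbb{F}_q$; Fourier inversion then yields $cD_{f,i}=D_{f,i}$, which is $(iii)$.
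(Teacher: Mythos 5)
Your proof is correct and follows essentially the same route as the paper's: the same substitution arguments for $(i)\Leftrightarrow(ii)$ via the Walsh transform and its inverse, the same Galois-action identity $\sigma_c(\chi_\beta(D_{f,i}))=\chi_{c\beta}(D_{f,i})=\chi_\beta(cD_{f,i})$ for $(iii)\Leftrightarrow(iv)$, and Fourier inversion for the final step (which the paper delegates to Lemma IV.2 of \cite{HL}). The only difference is cosmetic: you close the cycle with the immediate equivalence $(i)\Leftrightarrow(iii)$, while the paper instead proves $(iii)\Rightarrow(ii)$ via $W_f(\beta)=\sum_i\chi_\beta(D_{f,i})\zeta_p^i$.
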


\begin{proof}
$(i)\Rightarrow(ii)$. For $a\in\mathbb{F}^*_p$, we have
\[
W_f(a\beta)=\sum_{x\in\mathbb{F}_q}\zeta_p^{f(x)-\Tr(a\beta x)}=
\sum_{x\in\mathbb{F}_q}\zeta_p^{f(a^{-1}x)-\Tr(\beta x)}=\sum_{x\in\mathbb{F}_q}\zeta_p^{f(x)-\Tr(\beta x)}=W_f(\beta).
\]

$(ii)\Rightarrow(i)$.
By using the inverse Walsh-Hadamard transform,  for  $a\in\mathbb{F}^*_p$
\[
p^m\zeta^{f(x)}_p=\sum_{\beta\in\mathbb{F}_q}W_f(\beta)\zeta^{\Tr(\beta x)}_p
=\sum_{\beta\in\mathbb{F}_q}W_f(a^{-1}\beta)\zeta^{\Tr(\beta x)}_p\\
=\sum_{\beta\in\mathbb{F}_q}W_f(\beta)\zeta^{\Tr(a\beta x)}_p
=p^m\zeta^{f(ax)}_p.
\]

$(iii)\Rightarrow(ii)$.
By using $W_f(\beta)=\sum_{i=0}^{p-1}\chi_{\beta}( D_{f,i})\zeta^i_p$ and $\chi_{\beta}(a D_{f,i})=\chi_{a\beta}(D_{f,i})$ for $a\in\mathbb{F}^*_p$, we get the result.

$(i)\Rightarrow(iii)$. It is straightforward.

$(iii)\Rightarrow(iv)$. We have $\chi_{\beta}(D_{f,i})^{\sigma_a}=\chi_{\beta}(aD_{f,i})=\chi_{\beta}(D_{f,i})$ for all $a\in\mathbb{F}^*_p$, and the result follows.

$(iv)\Rightarrow(iiii).$ We have $\chi_{\beta}(aD_{f,i})=\chi_{\beta}(D_{f,i})^{\sigma_a}=\chi_{\beta}(D_{f,i})$ for all $\beta\in\mathbb{F}_q$. The result follows from Lemma IV.2 in \cite{HL}.
\end{proof}

By Theorem 3.4 and Lemma 4.1, we have the following corollary.

\begin{cor} Let $f$ be a $p$-ary function with $f(0)=0$ and  $f(ax)=f(x)$ for all $a\in \Bbb F_p^*$ and let $I=f(\mathbb{F}_q^*)$. Then the partition $P=\{D^*_{f,i}\}_{ i\in I}$ of $\mathbb{F}^*_q$ induces an $|I|$-class association scheme $(\mathbb{F}_q,\{R_i\}_{ i\in I})$ defined in Equation (\ref{eqn:(3.1)}) if and only if  the sizes of $I$ and $\{W_{f}(\beta):\beta\in \Bbb F^*_q\}$ are same.
\end{cor}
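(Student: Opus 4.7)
The plan is to derive this corollary as an immediate consequence of Theorem~\ref{eqn:thm3.4} combined with the equivalence $(i)\Leftrightarrow(ii)$ of Lemma~4.1. Theorem~\ref{eqn:thm3.4} gives the equivalence with the condition that $|I|$ equals the size of the set of vectors
\[
V=\{(W_{f}(\beta), W_{f}(2\beta), \ldots, W_{f}((p-1)\beta)) : \beta\in\mathbb{F}_q^*\},
\]
so the task is to rewrite $|V|$ in terms of the simpler set $\{W_f(\beta):\beta\in\mathbb{F}_q^*\}$ under the hypothesis $f(ax)=f(x)$.

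First, I would invoke Lemma~4.1$(i)\Rightarrow(ii)$ to conclude that $W_f(a\beta)=W_f(\beta)$ for every $a\in\mathbb{F}_p^*$ and every $\beta\in\mathbb{F}_q$. Applying this coordinatewise, each vector appearing in $V$ is of the form $(W_f(\beta),W_f(\beta),\dots,W_f(\beta))$, i.e.\ a constant $(p-1)$-tuple. Consequently the map $\beta\mapsto (W_f(\beta),\dots,W_f((p-1)\beta))$ factors through $\beta\mapsto W_f(\beta)$, and the natural bijection between the image of this constant-tuple map and the image of $\beta\mapsto W_f(\beta)$ yields
\[
|V|=\bigl|\{W_f(\beta):\beta\in\mathbb{F}_q^*\}\bigr|.
\]

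Substituting this equality into the criterion of Theorem~\ref{eqn:thm3.4}, the partition $P=\{D_{f,i}^*\}_{i\in I}$ induces an $|I|$-class association scheme on $\mathbb{F}_q$ if and only if $|I|=|\{W_f(\beta):\beta\in\mathbb{F}_q^*\}|$, which is precisely the conclusion. There is no real obstacle here: the hypothesis $f(ax)=f(x)$ is exactly what is needed to collapse the $(p-1)$-tuple condition of Theorem~\ref{eqn:thm3.4} to a single-value condition, and both directions of the implication follow simultaneously from the same substitution, so the proof is essentially a one-line application of the two earlier results.
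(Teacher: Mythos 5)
Your argument is correct and is exactly the reasoning the paper intends: it states the corollary as an immediate consequence of Theorem~3.4 and Lemma~4.1, and your use of Lemma~4.1$(i)\Rightarrow(ii)$ to collapse the $(p-1)$-tuple $(W_f(\beta),\dots,W_f((p-1)\beta))$ to the single value $W_f(\beta)$ is the intended substitution. No gaps.
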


In \cite{WYL}, the authors presented three classes of monomial functions with three-valued Walsh spectrum. Using the results in \cite{WYL} and Corollary 4.5, we have the following propositions.

Let $\phi(n)$ be the Euler $\phi$-function, and recall that $f(\mathbb{F}_q^*)=\{f(x): x\in \Bbb F_q^*\}$.

\begin{prop}\label{eqn:prop4.5}
Assume that   $m$  is  a positive integer,  $r$ is  an odd  prime such that $p\ge 3$ is a primitive root modulo $r^m$. Let $q=p^{\phi(r^m)}$ and $f(x)=\Tr_{q/p}(x^{\frac{q-1}{r^m}})$.

$(i)$ Then $\{D^*_{f,i}\}_{i\in f(\mathbb{F}_q^*)}$ induces a 2-class  association scheme if and only if $|f(\mathbb{F}^*_q)|=2$; equivalently, $m=1$ or $r\equiv 1\pmod p$.

$(ii)$ Then $\{D^*_{f,i}\}_{i\in f(\mathbb{F}_q^*)}$ induces a 3-class association scheme if and only if $|f(\mathbb{F}^*_q)|=3$; equivalently,  $m>1$ and  $r\not\equiv 1\pmod p$.

\end{prop}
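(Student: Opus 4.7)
The strategy is to invoke Corollary~4.5: once $f(ax)=f(x)$ is verified for all $a\in\mathbb{F}_p^*$, the existence of an $|I|$-class association scheme from $\{D^*_{f,i}\}_{i\in I}$ is equivalent to the numerical identity $|I|=|\{W_f(\beta):\beta\in\mathbb{F}_q^*\}|$. I will first confirm the invariance, then compute $|I|$ explicitly, and finally match it against the Walsh-spectrum count supplied by \cite{WYL}.

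The invariance follows from Lemma~\ref{eqn:lem4.1}(iii): I need $a^{(q-1)/r^m}=1$ for every $a\in\mathbb{F}_p^*$, equivalently $(p-1)\mid (q-1)/r^m$. Since $p$ is a primitive root modulo $r^m$ we have $r^m\mid q-1$ and $p\not\equiv 1\pmod r$, so $\gcd(p-1,r^m)=1$; combined with $(p-1)\mid q-1$, this gives $(p-1)r^m\mid q-1$ and the desired divisibility.

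To compute $|I|=|f(\mathbb{F}_q^*)|$, let $\zeta$ be a primitive $r^m$-th root of unity in $\mathbb{F}_q$. The map $x\mapsto x^{(q-1)/r^m}$ sends $\mathbb{F}_q^*$ onto $\mu_{r^m}=\langle\zeta\rangle$, so $f(\mathbb{F}_q^*)=\{\Tr(\zeta^j):0\le j<r^m\}$. Frobenius acts on exponents by multiplication by $p$ modulo $r^m$; because $p$ remains a primitive root modulo every $r^{m-t}$, an exponent $j$ with $v_r(j)=t<m$ has an orbit of size $\phi(r^{m-t})$, and summing $\zeta^{j'}$ over that orbit yields the M\"obius value $\mu(r^{m-t})$ in $\mathbb{F}_p$, which is $-1$ for $m-t=1$ and $0$ otherwise. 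Multiplying by $\phi(r^m)/\phi(r^{m-t})=r^t$ gives
\[
\Tr(\zeta^j)\equiv
\begin{cases}
\phi(r^m),& j=0,\\
-r^{m-1},& v_r(j)=m-1,\\
0,& v_r(j)<m-1
\end{cases}\pmod p.
\]
These candidates collapse exactly when $m=1$ (the third case is vacuous) or when $r\equiv 1\pmod p$ (so the first value becomes $0$), yielding $|I|=2$; in the complementary regime $m\ge 2$ and $r\not\equiv 1\pmod p$ the three residues are distinct, so $|I|=3$. This proves the ``equivalently'' clauses of (i) and (ii).

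The scheme equivalences now follow from Corollary~4.5 provided $|\{W_f(\beta):\beta\in\mathbb{F}_q^*\}|=|I|$ in each regime, and this is exactly what the explicit three-valued Walsh-spectrum computation of \cite{WYL} furnishes. The main technical step is this final matching: one has to verify from the formulas in \cite{WYL} that, when $|I|=2$, the value $W_f(0)$ does not add a spurious third spectrum value on $\mathbb{F}_q^*$, and that when $|I|=3$, all three spectrum values are realized on $\mathbb{F}_q^*$. Once that is checked, the two cases assemble into the statement.
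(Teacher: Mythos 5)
Your argument is correct and follows the paper's overall skeleton --- verify $f(ax)=f(x)$, determine $|f(\mathbb{F}_q^*)|$, match it against the Walsh-spectrum count, and close with Corollary 4.5 --- but it differs in one substantive place: where the paper simply cites \cite[Theorem 6]{WYL} for the fact that $f(\mathbb{F}_q^*)\subseteq\{\phi(r^m),-r^{m-1},0\}$ (with $0$ occurring exactly when $m>1$), you rederive it from scratch by noting that $x\mapsto x^{(q-1)/r^m}$ maps onto $\mu_{r^m}$, that $\Tr(\zeta^j)=\sum_{u\in(\mathbb{Z}/r^m\mathbb{Z})^*}\zeta^{ju}$ because $p$ generates $(\mathbb{Z}/r^m\mathbb{Z})^*$, and that this sum is $r^t\mu(r^{m-t})$ for $v_r(j)=t$ by the Ramanujan-sum evaluation. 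That computation is right (and the subsequent collapse analysis modulo $p$ is identical to the paper's: $-r^{m-1}\not\equiv 0$, $\phi(r^m)\not\equiv -r^{m-1}$, and $\phi(r^m)\equiv 0$ iff $r\equiv 1\pmod p$), so your version buys a self-contained proof of one of the two cited inputs at the cost of a page of elementary number theory. The other input --- that $|\{W_f(\beta):\beta\in\mathbb{F}_q^*\}|=|f(\mathbb{F}_q^*)|$ --- you correctly flag as the remaining step but leave to \cite{WYL}; the paper handles it the same way, by quoting from \cite[Theorem 10]{WYL} the explicit list $\{\sqrt{q}\zeta_p^{\phi(r^m)}+A,\ \sqrt{q}\zeta_p^{-r^{m-1}}+A,\ \sqrt{q}+A\}$, so that the Walsh values and the image values are indexed by the same three exponents and coincide exactly when those exponents coincide modulo $p$ (with the exponent $0$ occurring with frequency $q-1-\frac{q-1}{r^{m-1}}$, hence not at all when $m=1$). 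If you write that list down, your ``spurious value'' check becomes immediate; note only that the phrase ``the value $W_f(0)$'' is a slip --- $\beta=0$ is excluded from the set in question, and what must be checked is whether the value $\sqrt{q}+A$ attached to the exponent $0$ is attained for some $\beta\in\mathbb{F}_q^*$.
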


\begin{proof}
Let $\Bbb F_q^*=\langle \gamma \rangle$. Then $\Bbb F_p^*=\langle \gamma ^{\frac{q-1}{p-1}}\rangle$. By the assumption on $p$ and $r$, $\gcd(r,p-1)=1$ and $r^m \mid q-1=p^{\phi(r^m)}-1$, $r^m\mid \frac{q-1}{p-1}$. Hence, $f(ax)=f(a)$ for any $a\in \Bbb F_p^*$, and thus we can use Theorem \ref{eqn:thm3.4}. By the proof of \cite[Theorem 6]{WYL}, we have $f(\Bbb F_q^*)=\{\phi(r^m), -r^{m-1},0\}$, where the frequency of $0$ as the image of $f$ is $q-1-\frac{q-1}{r^{m-1}}$.
By \cite[Theorem 10]{WYL}, $$\{W_f(\beta):\beta\in \Bbb F_q^*\}=\{\sqrt{q}\zeta_p^{\phi(r^m)}+A, \sqrt{q}\zeta_p^{-r^{m-1}}+A, \sqrt{q}+A\},$$ where $A=1-\frac{\sqrt{q}+1}{r^m}(r^m-1+\zeta_p^{\phi(r^m)}+\phi(r)(\zeta_p^{-r^{m-1}}-1))$.
Since $\gcd(r,p)=1$, we have that $-r^{m-1}\not\equiv 0\pmod p$ and $\phi(r^m)=r^{m-1}(r-1)\not\equiv -r^{m-1}\pmod p$;
thus, $\phi(r^m)\equiv 0\pmod p$ if and only if $r\equiv 1\pmod p.$
Then the result follows from Corollary 4.5.
\end{proof}

\begin{exa}{\rm
(1) Let $p=3$, $m=1, r=5,$ $q=3^4$, and $f(x)=\Tr_{81/3}(x^{16})$. Then
$f(\mathbb{F}_q^*)=\{\phi(r^m), -r^{m-1}\}=\{1,2\}$, and it has size two. By Proposition \ref{eqn:prop4.5}, the partition $\{D^*_{f,i}\}_{i\in f(\mathbb{F}_q^*)}$ of $\mathbb{F}^*_q$ induces a 2-class association scheme.

(2) Let $p=3$, $m=2, r=7,$ $q=3^{42}$, and $f(x)=\Tr_{3^{42}/3}(x^{\frac{3^{42}-1}{49}})$.
Then $f(\mathbb{F}_q^*)=\{\phi(r^m), -r^{m-1},0\}=\{0,2\}$, which has size two. By Proposition \ref{eqn:prop4.5}, the partition $\{D^*_{f,i}\}_{i\in f(\mathbb{F}_q^*)}$ of $\mathbb{F}^*_q$ induces a 2-class association scheme.

(3) Let $p=3$, $m=2, r=5,$ $q=3^{20}$, and $f(x)=\Tr_{3^{20}/3}(x^{\frac{3^{20}-1}{25}})$.
Then $f(\mathbb{F}_q^*)=\{\phi(r^m), -r^{m-1},0\}=\Bbb F_3$, and it has size three. By Proposition \ref{eqn:prop4.5}, the partition $\{D^*_{f,i}\}_{i\in f(\mathbb{F}_q^*)}$ of $\mathbb{F}^*_q$ induces a 3-class association scheme.

}
\end{exa}


\begin{prop} \label{eqn:prop4.7}
Assume that   $m$  is  a positive integer,  $r\equiv1\pmod4$  is a  prime, and the order of a prime number $p$ modulo $r^m$ is $\frac{\phi(r^m)}{2}$.  Let $q=p^{\frac{\phi(r^m)}{2}}$ and $f(x)=\Tr_{q/p}(x^{\frac{q-1}{r^m}})$.

$(i)$ Then $\{D^*_{f,i}\}_{i\in f(\mathbb{F}_q^*)}$ induces a 2-class  association scheme if and only if $|f(\mathbb{F}^*_q)|=2$; equivalently,  $r\equiv 1\pmod p$.

$(ii)$ Then $\{D^*_{f,i}\}_{i\in f(\mathbb{F}_q^*)}$ induces a 3-class  association scheme if and only if $|f(\mathbb{F}^*_q)|=3$; equivalently, $r\not\equiv 1\pmod p$ and $m=1$.

$(iii)$ Then $\{D^*_{f,i}\}_{i\in f(\mathbb{F}_q^*)}$ induces a 4-class association scheme if and only if $|f(\mathbb{F}^*_q)|=4$; equivalently,   $r\not\equiv 1\pmod p$ and $m>1$.



\end{prop}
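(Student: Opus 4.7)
The plan is to follow the template of Proposition~4.6. I first verify that $f(ax)=f(x)$ for every $a\in\mathbb{F}_p^*$ and $x\in\mathbb{F}_q$, so that Corollary~4.5 applies and reduces each of (i)--(iii) to the numerical identity $|f(\mathbb{F}_q^*)|=|\{W_f(\beta):\beta\in\mathbb{F}_q^*\}|$. Since $r\equiv 1\pmod 4$ forces $r\ge 5$ and $\ord_{r^m}(p)=\phi(r^m)/2=r^{m-1}(r-1)/2>r^{m-1}$, the order of $p$ modulo $r^m$ cannot be a pure power of $r$, so $p\not\equiv 1\pmod r$ and $\gcd(r^m,p-1)=1$. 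Combining this with $(p-1)\mid q-1$ and $r^m\mid q-1$ yields $(p-1)r^m\mid q-1$, so $a^{(q-1)/r^m}=1$ for every $a\in\mathbb{F}_p^*$, which is the required $\mathbb{F}_p^*$-invariance.

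Next I would invoke from \cite{WYL} the two theorems that are the half-order analogues of Theorems~6 and~10 already used in the proof of Proposition~4.6: one gives an explicit description of the image $f(\mathbb{F}_q^*)$ as a list of nonzero integer exponents (with $0$ lying in $f(\mathbb{F}_q^*)$ if and only if $m>1$), and the other describes the Walsh spectrum in the form $W_f(\beta)=\sqrt{q}\,\zeta_p^{a_k}+A$, where the shift $A$ depends only on $r$, $m$, and $p$ but not on the index $k$. Because the shift is common to all Walsh values, the number of distinct Walsh values on $\mathbb{F}_q^*$ equals the number of distinct residues modulo $p$ of the exponents (together with $0$ when present), which is precisely $|f(\mathbb{F}_q^*)|$.

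The last step is a case split on whether $r\equiv 1\pmod p$ and on whether $m=1$ or $m>1$. The nonzero exponents produced by \cite{WYL} share a common factor of $r^{m-1}$ and the remaining factors involve $(r-1)/2$, so some of them collapse modulo $p$ exactly when $p\mid r-1$; in that regime the common count drops to $2$, giving case~(i). When $r\not\equiv 1\pmod p$, the nonzero exponents remain pairwise distinct modulo $p$ and are nonzero modulo $p$ because $\gcd(r,p)=1$; hence if $m=1$ (so $0$ is absent from the image) the count is $3$, giving case~(ii), and if $m>1$ (so $0$ is present and distinct from every nonzero exponent modulo $p$) the count is $4$, giving case~(iii). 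Corollary~4.5 then delivers all three equivalences simultaneously. The main obstacle I foresee is confirming the common-shift form of the Walsh-spectrum formula of \cite{WYL} in the half-order regime, so that residue-counting on the image transports faithfully to the Walsh value set; once that bookkeeping is in place the remainder is elementary congruence arithmetic.
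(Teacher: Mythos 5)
Your proposal is correct and follows essentially the same route as the paper: establish the $\mathbb{F}_p^*$-invariance $f(ax)=f(x)$ so that Corollary 4.5 applies, import the image and the Walsh spectrum from \cite{WYL} (Theorems 16 and 20 there; the common additive shift $B$ in the Walsh values is exactly what makes the count of distinct Walsh values equal to $|f(\mathbb{F}_q^*)|$), and then count the residues of the exponents $\frac{\phi(r^m)}{2}$, $\frac{(\sqrt r-1)r^{m-1}}{2}$, $\frac{(-\sqrt r-1)r^{m-1}}{2}$ (together with $0$ precisely when $m>1$) modulo $p$. The one point the paper treats separately and your sketch glosses over is $p=2$, which is admissible here (unlike in Proposition 4.6): one cannot then divide out the ``unit'' $r^{m-1}/2$, and the paper instead uses $r\equiv 1\pmod 4$ to get $\frac{r-1}{2}\equiv 0\pmod 2$ while $\frac{\sqrt r-1}{2}\not\equiv\frac{-\sqrt r-1}{2}\pmod 2$, so that $f(\mathbb{F}_q^*)=\{0,1\}$ — still consistent with your case (i), since $r\equiv 1\pmod 2$ holds automatically. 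Your derivation of $\gcd(r^m,p-1)=1$ from the order hypothesis is in fact more complete than the paper's, which merely asserts it.
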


\begin{proof} Let $\Bbb F_q^*=\langle \gamma \rangle$. Then $\Bbb F_p^*=\langle \gamma ^{\frac{q-1}{p-1}}\rangle$.
 From the assumption on $r$ and $p$, we have that $\gcd(r,p-1)=1$ and $r^m \mid q-1=p^{\frac{\phi(r^m)}2}-1$; so, $r^m\mid \frac{q-1}{p-1}$. Hence, $f(ax)=f(a)$ for any $a\in \Bbb F_p^*$, and we can use Theorem \ref{eqn:thm3.4}.
 By the proof of \cite[Theorem 16]{WYL}, $$f(\Bbb F_q^*)=\bigg\{\frac{\phi(r^m)}2, \frac{(\sqrt r-1){r^{m-1}}}{2},\frac{(-\sqrt r-1){r^{m-1}}}{2},0\bigg\},$$ where the frequency of $0$ as the image of $f$ is $q-1-\frac{q-1}{r^{m-1}}$. By \cite[Theorem 20]{WYL}, $$\{W_f(\beta):\beta\in \Bbb F_q^*\}=\{\sqrt{q}\zeta_p^{\frac{\phi(r^m)}2}+B,
\sqrt{q}\zeta_p^{\frac{(\sqrt r-1){r^{m-1}}}{2}}+B,
\sqrt{q}\zeta_p^{\frac{(-\sqrt r-1){r^{m-1}}}{2}}+B,
\sqrt{q}+B\},$$ where $B=1-r^m-1+\zeta_p^{\frac{\phi(r^m)}2}+\frac{\phi(r)}2(\zeta_p^{\frac{(\sqrt r-1)r^{m-1}}{2}}-1)+\frac{\phi(r)}2(\zeta_p^{\frac{(-\sqrt r-1)r^{m-1}}{2}}-1)$.

(i) If $p=2$, then $r^{m-1}\not\equiv 0\pmod 2$ since $r\equiv 1\pmod 4$. In this case, we have $f(\Bbb F_q^*)=\{r^{m-1}\cdot\frac{r-1}{2}, r^{m-1}\cdot \frac{\sqrt r-1}2,  r^{m-1}\cdot \frac{-\sqrt r-1}2,0\}$, $\frac{r-1}{2}\equiv 0\pmod 2$, and $\frac{\sqrt r-1}2\not\equiv \frac{-\sqrt r-1}2\pmod 2$.
Hence we have $f(\Bbb F_q^*)=\{0,1\}$ when $p=2$.


(ii) If $p>2$, then $-\frac{r^{m-1}}2\not\equiv 0\pmod p$ and $\sqrt r-1\not\equiv -\sqrt r -1\pmod p$ since $\gcd(r,p)=1$. Then $f(\Bbb F_q^*)$ and the set $A=\{r-1, \sqrt r-1, -\sqrt r-1,0\} \pmod p$ have the same sizes.
Note that $r-1=\sqrt r-1 \iff \sqrt r=1$ and $r-1=-\sqrt r-1$ $\iff$ $\sqrt r=-1$. Hence, $|A|=2$ if and only if $r\equiv 1\pmod p$, $|A|=3$ if and only if $r\not\equiv 1\pmod p$ and $m=1$,  and $|A|=4$ if and only if $r\not\equiv 1\pmod p$ and $m>1$.

Therefore, the result follows from Corollary 3.5.
\end{proof}

\begin{exa}{\rm

(1) Let $p=2$, $m=1, r=17,$ $q=256$, and $f(x)=\Tr_{256/2}(x^{15})$.
Then $f(\Bbb F_q^*)=\{0,1\}$, which has size two. By Proposition \ref{eqn:prop4.7}, the partition $\{D^*_{f,i}\}_{i\in f(\mathbb{F}_q^*)}$  of $\mathbb{F}^*_q$ induces a 2-class association scheme.

(2) Let $p=19$, $m=1, r=5,$ $q=361$, and $f(x)=\Tr_{361/19}(x^{\frac{19^{10}-1}{25}})$.
Then $f(\Bbb F_q^*)=\{\frac{r-1}2, \frac{(\sqrt r-1){}}{2},\frac{(-\sqrt r-1){}}{2}\}=\{2,4,14\}$, and it has size three. By Proposition \ref{eqn:prop4.7}, the partition $\{D^*_{f,i}\}_{i\in f(\mathbb{F}_q^*)}$  of $\mathbb{F}^*_q$ induces a 3-class association scheme.

(3) Let $p=19$, $m=2, r=5,$ $q=19^{10}$, and $f(x)=\Tr_{19^{10}/19}(x^{\frac{19^{10}-1}{25}})$.
Then $f(\Bbb F_q^*)=\{\frac{\phi(r^m)}2, \frac{(\sqrt r-1){r^{m-1}}}{2},\frac{(-\sqrt r-1){r^{m-1}}}{2},0\}=\{0,1,6,14\}$, which has size four. By  Proposition \ref{eqn:prop4.7}, the partition $\{D^*_{f,i}\}_{i\in f(\mathbb{F}_q^*)}$ of $\mathbb{F}^*_q$ induces a 4-class association scheme.
 }
\end{exa}

\begin{prop} \label{eqn:prop4.9}
Assume that   $m,n$  are  two positive integers, $p_1\equiv p_2\equiv3\pmod4$ are two distinct primes with $\gcd(p_1(p_1-1),p_2(p_2-1))=2$, and a prime number $p\ge7$ is a common primitive root modulo $p_1^m$ and $p_2^n$. Let $N=p_1^mp_2^n$, $q=p^{\frac{\phi(N)}2}$,  and $f(x)=\Tr_{q/p}(x^{\frac{q-1}{N}})$.
Then the set $\{D^*_{f,i}\}_{i\in f(\mathbb{F}_q^*)}$ induces a $d$-class association scheme
if and only if the size of
$$
f(\mathbb{F}^*_q)=\bigg\{\frac{\phi(N)}2,
{-\frac{\phi(N)}{2(p_1-1)}},{-\frac{\phi(N)}{2(p_2-1)}}, \frac{(1-\sqrt {p_1p_2})N}{2p_1p_2}, \frac{(1+\sqrt {p_1p_2})N}{2p_1p_2},0\bigg\}$$ is $d$, where $\sqrt {p_1p_2}$ belongs to $\Bbb F_p$ such that $\sqrt {p_1p_2}\equiv p_1p_2\pmod p$.


\end{prop}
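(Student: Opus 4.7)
The plan is to follow the same strategy as in Propositions 4.5 and 4.7: apply Corollary 4.5, which reduces the claim to verifying the equality $|f(\mathbb{F}_q^*)|=|\{W_f(\beta):\beta\in\mathbb{F}_q^*\}|$ under the hypothesis $f(ax)=f(x)$ for all $a\in\mathbb{F}_p^*$ and $x\in\mathbb{F}_q$.

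First I would verify that hypothesis. Since $f(x)=\Tr_{q/p}(x^{(q-1)/N})$ with $f(0)=0$, it is enough to show $(p-1)N\mid q-1$. The common primitive root assumption, combined with the coprimality condition $\gcd(p_1(p_1-1),p_2(p_2-1))=2$ and the congruence $p_1\equiv p_2\equiv 3\pmod 4$ (which forces $\gcd(\phi(p_1^m),\phi(p_2^n))=2$), implies via the Chinese remainder theorem that the multiplicative order of $p$ modulo $N$ equals $\phi(N)/2$; therefore $N\mid q-1$. Moreover, $p_i\nmid p-1$ for $i=1,2$, since otherwise $p\equiv 1\pmod{p_i}$ would contradict the primitivity of $p$ modulo $p_i$. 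Hence $\gcd(N,p-1)=1$, and the two divisibilities $N\mid q-1$ and $(p-1)\mid q-1$ combine to yield $(p-1)N\mid q-1$, as required.

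Next I would invoke the relevant results of \cite{WYL} (the analogues for the modulus $N=p_1^m p_2^n$ of the Theorems 10, 16, 20 used earlier) to obtain simultaneously the image $f(\mathbb{F}_q^*)$, which matches the displayed six-element set, and an expression for the Walsh spectrum of the form
\[
\{W_f(\beta):\beta\in\mathbb{F}_q^*\}=\{\sqrt{q}\,\zeta_p^{v}+C:v\in f(\mathbb{F}_q^*)\},
\]
for some constant $C\in\mathbb{Q}(\zeta_p)$ independent of $\beta$. Since $\{1,\zeta_p,\ldots,\zeta_p^{p-1}\}$ generates $\mathbb{Q}(\zeta_p)$ subject only to $1+\zeta_p+\cdots+\zeta_p^{p-1}=0$, the map $v\mapsto\sqrt{q}\,\zeta_p^v+C$ is injective on $\mathbb{F}_p$. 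Thus the Walsh spectrum is in bijection with the image $f(\mathbb{F}_q^*)\subseteq\mathbb{F}_p$, and Corollary 4.5 delivers the equivalence with class number $d=|f(\mathbb{F}_q^*)|$.

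The main obstacle lies in the second step: extracting from \cite{WYL} the simultaneous formulas for the image and Walsh spectrum in the required shape for the composite modulus $N=p_1^m p_2^n$, and checking that its hypotheses are implied by those of the present proposition. A minor point is that $\sqrt{p_1p_2}$ must exist in $\mathbb{F}_p$; this follows from quadratic reciprocity, for primitivity gives $(p/p_i)=-1$ for $i=1,2$, so with $p_1\equiv p_2\equiv 3\pmod 4$ one computes $(p_1p_2/p)=(p_1/p)(p_2/p)=1$. Once the formulas of step two are in hand, the remainder of the argument is a formal consequence of the injectivity in step three and the criterion of Corollary 4.5.
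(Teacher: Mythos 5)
Your proposal follows essentially the same route as the paper's proof: verify $N\mid\frac{q-1}{p-1}$ so that $f(ax)=f(x)$ for all $a\in\mathbb{F}_p^*$, quote the image and Walsh-spectrum formulas from \cite{WYL} (the paper cites Theorems 26 and 30 there), and conclude via the corollary comparing $|f(\mathbb{F}_q^*)|$ with $|\{W_f(\beta):\beta\in\mathbb{F}_q^*\}|$. Your added details (the order computation giving $N\mid q-1$, the coprimality $\gcd(N,p-1)=1$, the injectivity of $v\mapsto\sqrt{q}\,\zeta_p^{v}+C$, and the quadratic-reciprocity check that $\sqrt{p_1p_2}$ exists in $\mathbb{F}_p$) are correct elaborations of steps the paper leaves implicit.
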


\begin{proof} Let $\Bbb F_q^*=\langle \gamma \rangle$. Then $\Bbb F_p^*=\langle \gamma ^{\frac{q-1}{p-1}}\rangle$. By the assumption, we have that $\gcd(p_1p_2,p-1)=1$ and $p_1^mp_2^n \mid q-1$; thus,  $N=p_1^mp_2^n\mid \frac{q-1}{p-1}$. Hence,  $f(ax)=f(a)$ for any $a\in \Bbb F_p^*$; therefore,
we can use Theorem \ref{eqn:thm3.4}. By the proof of \cite[Theorem 26]{WYL},  $$f(\Bbb F_q^*)=\bigg\{\frac{\phi(N)}2,
{-\frac{\phi(N)}{2(p_1-1)}},{-\frac{\phi(N)}{2(p_2-1)}}, \frac{(1-\sqrt {p_1p_2})N}{2p_1p_2}, \frac{(1+\sqrt {p_1p_2})N}{2p_1p_2},0\bigg\},$$ where the frequency of $0$ as the image of $f$ is $q-1-\frac{q-1}{p_1^{m-1}p_2^{n-1}}$. By \cite[Theorem 30]{WYL}, the set $\{W_f(\beta):\beta\in \Bbb F_q^*\}$ is given by
\begin{eqnarray*}\bigg\{\sqrt{q}\zeta_p^{\frac{\phi(N)}2}+C,
\sqrt{q}\zeta_p^{{-\frac{\phi(N)}{2(p_1-1)}}}+C,
\sqrt{q}\zeta_p^{{-\frac{\phi(N)}{2(p_2-1)}}}+C,\\
\sqrt{q}\zeta_p^{\frac{(1-\sqrt {p_1p_2})N}{2p_1p_2}}+C,
\sqrt{q}\zeta_p^{\frac{(1+\sqrt {p_1p_2})N}{2p_1p_2}}+C,
\sqrt{q}+C\bigg\}, \end{eqnarray*} where \begin{eqnarray*}C&=&N-1+\zeta_p^{\frac{\phi(N)}{2}}+\phi(p_1)(\zeta_p^{-\frac{\phi(N)}{2(p_1-1)}}-1)+\phi(p_2)(\zeta_p^{-\frac{\phi(N)}{2(p_2-1)}}-1)\\&+&(\zeta_p^{\frac{(1-\sqrt {p_1p_2})N}{2p_1p_2}}+\zeta_p^{\frac{(1+\sqrt {p_1p_2})N}{2p_1p_2}}-2)\frac{\phi(p_1p_2)}{2}.\end{eqnarray*}
The result thus follows from Corollary 3.5.
\end{proof}

\begin{exa} {\rm
(1) Let $p=2$, $p_1=3$, $p_2=11$, and $m=n=1$. Then $N=33$, $q=2^{10}$, and $f(x)=\Tr_{2^{10}/2}(x^{\frac{2^{10}-1}{33}})$. Then $f(\Bbb F_q^* )=\{0,1\}$. By  Proposition \ref{eqn:prop4.9}, the partition $\{D^*_{f,i}\}_{i\in f(\mathbb{F}_q^*)}$ of $\mathbb{F}^*_q$ induces a 2-class association scheme.

(2) Let $p=101$, $p_1=3, p_2=11$, and $m=n=1$. Then    $N=33$,  $q=101^{10}$,  and $f(x)=\Tr_{101^{10}/101}(x^{\frac{101^{10}-1}{33}})$.  Then $$f(\Bbb F_q^* )=\bigg\{\frac{\phi(N)}2,
{-\frac{\phi(N)}{2(p_1-1)}},{-\frac{\phi(N)}{2(p_2-1)}}, \frac{(1-\sqrt {p_1p_2})N}{2p_1p_2}, \frac{(1+\sqrt {p_1p_2})N}{2p_1p_2}\bigg\}$$ is $\{10,15, 87, 96, 100\}$. By  Proposition \ref{eqn:prop4.9}, the partition $\{D^*_{f,i}\}_{i\in f(\mathbb{F}_q^*)}$ of $\mathbb{F}^*_q$ induces a 5-class association scheme.

 (3) Let $p=101$, $p_1=3,p_2=11$, and $m=n=2$. Then   $N=33^2$,  $q=101^{330}$,  and $f(x)=\Tr_{101^{330}/101}(x^{\frac{101^{330}-1}{33^2}})$.  Then $$f(\Bbb F_q^* )=\bigg\{\frac{\phi(N)}2,
{-\frac{\phi(N)}{2(p_1-1)}},{-\frac{\phi(N)}{2(p_2-1)}}, \frac{(1-\sqrt {p_1p_2})N}{2p_1p_2}, \frac{(1+\sqrt {p_1p_2})N}{2p_1p_2},0\bigg\}$$ is $\{0,27, 35, 37, 45, 91\}$. By  Proposition \ref{eqn:prop4.9}, the partition $\{D^*_{f,i}\}_{i\in f(\mathbb{F}_q^*)}$ of $\mathbb{F}^*_q$ induces a 6-class association scheme.
}
\end{exa}


\subsection{Construction of two-weight linear codes}$~$\label{subsection4.3}


An $[n, k, d]$ linear code $\mathcal{C}$ over the finite field $\Bbb F_p$ 
is a $k$-dimensional subspace of $\Bbb F_p^n$ with minimum  Hamming  distance $d$.
  Let $A_i$ be the number of codewords in $\mathcal C$
 with Hamming weight $i$. The weight enumerator of
 $\mathcal C$ is defined by
 $1+A_1z+\cdots+A_nz^n.$
The sequence $(1, A_1, A_2, \ldots, A_n)$ is called the \emph{weight distribution} of
 $\mathcal C$. We say that a code $\mathcal{C}$ has \emph{$t$-weight} if the number of nonzero $A_{i}$'s 
 in the sequence $(A_1, A_2, \ldots, A_n)$ is equal to $t$.

Let $D$ be a subset of $\mathbb{F}^*_q$.
A linear code $\mathcal{C}_D$ of length $|D|$ over the finite field $\Bbb F_p$ is defined by
\begin{equation}\label{eqn:4.4}
 \mathcal{C}_D = \{c_D(\beta)= (\operatorname{Tr}(\beta \alpha))	 _{\alpha\in D} \; : \; \beta\in\mathbb{F}_q \}.
\end{equation}
The code $\mathcal{C}_D$ is called a \emph{trace code} over $\mathbb{F}_p$, and the set $D$ is called the \emph{defining set} of $\mathcal{C}_D$. This generic construction was first introduced by Ding {\em et al.} \cite{D1, DN}.  Recent results on this generic construction of linear codes from cryptographic functions are due to the references \cite{CK, CCZ, CDY, D2, DW, M1, MOS, MS,TLQZH, YCD, ZLFH}.

The dimension of the linear code $\mathcal{C}_D$ is determined by the following lemma.

\begin{lem}\cite[Theorem 6]{DN} Let $D$ be a subset of $\mathbb{F}^*_q$. The dimension of the linear code $\mathcal{C}_D$ is equal to the dimension
of $\left\langle D\right\rangle$, where $\left\langle D\right\rangle$ is the subgroup of $\mathbb{F}^*_q$
generated by $D$.
\end{lem}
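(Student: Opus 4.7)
The statement cited as \cite[Theorem 6]{DN} can be proved in a short and self-contained way, once one interprets $\langle D\rangle$ as the $\mathbb{F}_p$-subspace of $\mathbb{F}_q$ spanned by $D$ (the only sensible reading, since the phrase ``dimension of a subgroup of $\mathbb{F}_q^*$'' is otherwise meaningless in characteristic $p$). The plan is to realise $\mathcal{C}_D$ as the image of an $\mathbb{F}_p$-linear map and then identify its kernel via the non-degeneracy of the trace pairing.

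First I would introduce the evaluation map
\[
\phi:\mathbb{F}_q\longrightarrow \mathbb{F}_p^{|D|},\qquad \phi(\beta)=(\operatorname{Tr}(\beta\alpha))_{\alpha\in D}.
\]
This map is $\mathbb{F}_p$-linear, and by definition \eqref{eqn:4.4} its image is precisely $\mathcal{C}_D$. Hence by the rank--nullity theorem one has $\dim_{\mathbb{F}_p}\mathcal{C}_D=m-\dim_{\mathbb{F}_p}\ker\phi$, so everything reduces to computing the kernel.

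Next I would characterise $\ker\phi$. By definition, $\beta\in\ker\phi$ iff $\operatorname{Tr}(\beta\alpha)=0$ for every $\alpha\in D$. Since $\operatorname{Tr}$ is $\mathbb{F}_p$-linear in its argument, this condition extends automatically from $D$ to the $\mathbb{F}_p$-span of $D$, namely $\langle D\rangle$. Therefore $\ker\phi=\langle D\rangle^{\perp}$, the annihilator of $\langle D\rangle$ with respect to the symmetric $\mathbb{F}_p$-bilinear form $(\beta,v)\mapsto\operatorname{Tr}(\beta v)$ on $\mathbb{F}_q\times\mathbb{F}_q$.

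Finally, I would invoke non-degeneracy of the trace form on $\mathbb{F}_q/\mathbb{F}_p$: for any $\mathbb{F}_p$-subspace $V\subseteq\mathbb{F}_q$ one has $\dim_{\mathbb{F}_p}V+\dim_{\mathbb{F}_p}V^\perp=m$. Applying this with $V=\langle D\rangle$ gives $\dim_{\mathbb{F}_p}\ker\phi=m-\dim_{\mathbb{F}_p}\langle D\rangle$, and plugging this back into the rank--nullity identity yields $\dim_{\mathbb{F}_p}\mathcal{C}_D=\dim_{\mathbb{F}_p}\langle D\rangle$, as claimed. There is no real obstacle here — the only delicate point is simply to note that the conclusion concerns the $\mathbb{F}_p$-span of $D$ inside the additive group $(\mathbb{F}_q,+)$, not a multiplicative subgroup; all three ingredients (rank--nullity, $\mathbb{F}_p$-linearity of the trace, and non-degeneracy of the trace pairing) are standard.
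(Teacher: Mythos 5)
The paper does not prove this lemma at all: it is quoted verbatim as \cite[Theorem 6]{DN} and used as a black box, so there is no internal argument to compare yours against. Your proof is correct and is the standard one: $\mathcal{C}_D$ is the image of the $\mathbb{F}_p$-linear evaluation map $\phi(\beta)=(\operatorname{Tr}(\beta\alpha))_{\alpha\in D}$, its kernel is the trace-annihilator of the $\mathbb{F}_p$-span of $D$, and rank--nullity together with non-degeneracy of the trace form $(\beta,v)\mapsto\operatorname{Tr}(\beta v)$ gives $\dim\mathcal{C}_D=\dim_{\mathbb{F}_p}\langle D\rangle$. You were also right to insist on reading $\langle D\rangle$ as the additive $\mathbb{F}_p$-span rather than the multiplicative subgroup generated by $D$, as the statement is literally transcribed here: under the multiplicative reading the claim is false (take $D=\{g\}$ with $g$ a primitive element of $\mathbb{F}_q^*$; the code has length $1$ and dimension $1$, while the multiplicative subgroup generated by $D$ is all of $\mathbb{F}_q^*$, whose span has dimension $m$), whereas your reading makes the lemma true and matches what is actually needed later, e.g.\ to get dimension $\phi(r^m)$ in Proposition 4.15. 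The only cosmetic caveat is that your argument is a reconstruction of the cited external result rather than a check of anything written in this paper.
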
 For each $\beta\in\mathbb{F}_q^*$, the weight of $c_D(\beta)\in\mathcal{C}_D$ is given by
\begin{multline}\label{5}
\operatorname{wt}(c_D(\beta))
= |D| - \frac{1}{p}\sum_{\alpha\in D}\sum_{a\in\mathbb{F}_p}\zeta_p^{a\Tr(\alpha\beta)}
=(1-\frac{1}{p})|D|-\frac{1}{p}\sum_{a\in\mathbb{F}^*_p}(\chi_{\beta}(D))^{\sigma_a}.
\end{multline}

\begin{lem} Let $f$ be a $p$-ary function satisfying that $f(ax)=f(x)$ for all $a\in \Bbb F_p^*$.
Then for $i\in\mathbb{F}_p$,
\[
\operatorname{wt}(c_{D^*_{f,i}}(\beta))
=\frac{p-1}{p}|D^*_{f,i}|-\frac{p-1}{p^2} \sum_{y\in \Bbb F_p^*}(\zeta^{-i}_pW_f(\beta))^{\sigma_y}.
\]
\end{lem}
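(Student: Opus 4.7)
The plan is to apply the weight formula Eq.~\eqref{5} (derived just before the lemma) to $D = D^*_{f,i}$ and simplify the character sum $\chi_\beta(D^*_{f,i})$ via Corollary~\ref{eqn:cor3.2} together with the invariance property forced by the hypothesis.

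First, by Lemma~\ref{eqn:lem4.1}, the assumption $f(ax) = f(x)$ yields $W_f(z\beta) = W_f(\beta)$ for every $z \in \mathbb{F}_p^*$. Substituting this into the formula of Corollary~\ref{eqn:cor3.2} factors $W_f(\beta)$ out of the innermost $z$-sum, which collapses to $W_f(\beta)\cdot T_j$, where $T_j := \sum_{z \in \mathbb{F}_p^*}\zeta_p^{jz}$ is the rational integer equal to $p-1$ if $j=0$ and $-1$ otherwise. Since $T_j \in \mathbb{Z}$, it is fixed by every $\sigma_y$, so it can be pulled through the outer Galois sum. Exchanging the order of summation gives
\[
\chi_\beta(D_{f,i}) = \frac{1}{p^2}\Bigl(\sum_{j \in \mathbb{F}_p}T_j\zeta_p^j\Bigr)\sum_{y \in \mathbb{F}_p^*}(\zeta_p^{-i}W_f(\beta))^{\sigma_y}.
\]
The inner numerical factor evaluates to $\sum_{j \in \mathbb{F}_p}T_j\zeta_p^j = (p-1) - \sum_{j \in \mathbb{F}_p^*}\zeta_p^j = p$, so
\[
\chi_\beta(D_{f,i}) = \frac{1}{p}\sum_{y \in \mathbb{F}_p^*}(\zeta_p^{-i}W_f(\beta))^{\sigma_y}.
\]

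Next, I would observe that the right-hand side is invariant under every $\sigma_a$ (reindex $y \mapsto ya$), so $\chi_\beta(D_{f,i}) \in \mathbb{Q}$, and consequently $\chi_\beta(D^*_{f,i})$, which differs from $\chi_\beta(D_{f,i})$ by at most a rational integer, also lies in $\mathbb{Q}$. Hence every $\sigma_a$-conjugate of $\chi_\beta(D^*_{f,i})$ equals $\chi_\beta(D^*_{f,i})$ itself, which gives $\sum_{a \in \mathbb{F}_p^*}(\chi_\beta(D^*_{f,i}))^{\sigma_a} = (p-1)\chi_\beta(D^*_{f,i})$. Substituting the simplified value of $\chi_\beta$ into Eq.~\eqref{5} and collecting the resulting factor $\tfrac{p-1}{p^2}$ produces the claimed identity.

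The main obstacle is the Galois-theoretic bookkeeping in the middle step: one must recognize that the hypothesis collapses the double $(j,z)$-sum of Corollary~\ref{eqn:cor3.2} into a product of a rational integer $\sum_j T_j\zeta_p^j$ with a single Galois-symmetric trace, and track carefully how the factors $T_j$ commute with each $\sigma_y$. Once rationality of $\chi_\beta(D^*_{f,i})$ is in hand, the $\sigma_a$-sum contributes only the multiplicity $p-1$, and no further cancellation or case analysis is needed.
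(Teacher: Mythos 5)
Your proposal is correct and takes essentially the same route as the paper: substitute $W_f(z\beta)=W_f(\beta)$ (from Lemma~\ref{eqn:lem4.1}) into Corollary~\ref{eqn:cor3.2} to collapse the double $(j,z)$-sum to $\frac{1}{p}\sum_{y\in\mathbb{F}_p^*}(\zeta_p^{-i}W_f(\beta))^{\sigma_y}$, then use rationality of $\chi_\beta(D^*_{f,i})$ to reduce the Galois sum in~\eqref{5} to a factor of $p-1$. The only cosmetic differences are that you establish rationality by reindexing the Galois trace whereas the paper cites Lemma~\ref{eqn:lem4.1}$(iv)$, and that both arguments make the same silent identification of $\chi_\beta(D^*_{f,i})$ with $\chi_\beta(D_{f,i})$.
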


\begin{proof}
From Corollary \ref{eqn:cor3.2} and the assumption on $f$, it follows that
\begin{multline}\label{eqq1}
\chi_{\beta}(D^*_{f,i})
=\frac{1}{p^2}\sum_{j\in\mathbb{F}_p}\zeta^j_p\sum_{y\in\mathbb{F}^*_p}\left(\zeta_p^{-i}\sum_{z\in\mathbb{F}^*_p}\zeta_p^{jz}W_f(z \beta)\right)^{\sigma_{y}}
=\frac{1}{p^2}\sum_{j\in\mathbb{F}_p}\zeta^j_p\sum_{y\in\mathbb{F}^*_p}\left(\zeta_p^{-i}\sum_{z\in\mathbb{F}^*_p}\zeta_p^{jz}W_f( \beta)\right)^{\sigma_{y}}\\
=\frac{1}{p^2}\sum_{y\in\mathbb{F}^*_p}\left(\zeta_p^{-i}\sum_{z\in\mathbb{F}^*_p}W_f( \beta)\right)^{\sigma_{y}}\sum_{j\in\mathbb{F}_p}\zeta^j_p(-1+p\delta_{0,j})
=\frac{1}{p}\sum_{y\in\mathbb{F}^*_p}\left(\zeta_p^{-i}W_f( \beta)\right)^{\sigma_{y}}.
\end{multline}
By Lemma \ref{eqn:lem4.1}, (\ref{5}), we get
\begin{equation}\label{eqq2}
\operatorname{wt}(c_{D^*_{f,i}}(\beta))
=(1-\frac{1}{p})|D^*_{f,i}|-\frac{1}{p}\sum_{a\in\mathbb{F}^*_p}(\chi_{\beta}(D^*_{f,i}))^{\sigma_a}
=(1-\frac{1}{p})|D^*_{f,i}|-\frac{p-1}{p}\chi_{\beta}(D^*_{f,i}).
\end{equation}
The proof completes by combing Eq.~\eqref{eqq1} with Eq.~\eqref{eqq2}.
\end{proof}

\begin{prop} Under the same assumption as Proposition \ref{eqn:prop4.5}, 
we obtain the following four classes of two-weight linear codes defined in Equation (\ref{eqn:4.4}).

(1) $\mathcal{C}_{D_{f,-r^{m-1}}}$ is a two-weight linear code with parameters $$\bigg[\frac{q-1}{r^{m-1}}-\frac{q-1}{r^{m}}, \phi(r^m), \frac{(p-1)(q+\sqrt{q})(r-1)}{pr^m}-\frac{(p-1)\sqrt{q}}{p}\bigg]$$ and its weight distribution is given in Table 1.

(2) If $r\not\equiv 1\pmod p$, then
$\mathcal{C}_{D_{f,\phi(r^m)}}$ is a two-weight linear code with parameters $$\bigg[\frac{q-1}{r^m}, \phi(r^m), \frac{(p-1)(q+\sqrt{q})}{pr^m}-\frac{(p-1)\sqrt{q}}{p}\bigg]$$ and its weight distribution is given in Table 2.

(3) If $r\equiv 1\pmod p$, then   $\mathcal{C}_{D^*_{f,0}}$ is a two-weight linear codes with parameters $$\bigg[q-1-\frac{q-1}{r^{m-1}}+\frac{q-1}{r^m}, \phi(r^m), \frac{(p-1)(q-1)}{pr^m}+\frac{p-1}{p}\cdot(q-1-\frac{q+\sqrt{q}}{r^{m-1}})\bigg]$$ and its weight distribution is given in Table 3. If $r\not\equiv 1\pmod p$, then $\mathcal{C}_{D^*_{f,0}}$ is a two-weight linear codes with parameters $$\bigg[q-1-\frac{q-1}{r^{m-1}}, \phi(r^m), \frac{p-1}{p}\cdot(q-1-\frac{q+\sqrt{q}}{r^{m-1}})\bigg]$$ and its weight distribution is given in Table 4.

\end{prop}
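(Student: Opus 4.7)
The plan is to apply the preceding Lemma~4.14, which is available since $f(ax)=f(x)$ for all $a\in\mathbb{F}_p^*$ under the hypothesis of Proposition~\ref{eqn:prop4.5}, and combine it with the data already extracted in the proof of that proposition: the three cardinalities $|D_{f,\phi(r^m)}|$, $|D_{f,-r^{m-1}}|$, $|D_{f,0}^*|$, and the three-valued Walsh spectrum
\[
W_f(\beta)\in\{\sqrt{q}\,\zeta_p^{\phi(r^m)}+A,\ \sqrt{q}\,\zeta_p^{-r^{m-1}}+A,\ \sqrt{q}+A\}
\]
with the multiplicities recorded in \cite[Theorem~10]{WYL}. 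The dimensions and length parameters are read off from these cardinalities, so the real work is in the weight analysis.

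For each choice $i\in\{-r^{m-1},\,\phi(r^m),\,0\}$ and each of the three possible Walsh values $W_f(\beta)=\sqrt{q}\,\zeta_p^{c}+A$ (with $c\in\{\phi(r^m),-r^{m-1},0\}$), I would substitute directly into
\[
\mathrm{wt}(c_{D_{f,i}^*}(\beta))=\tfrac{p-1}{p}|D_{f,i}^*|-\tfrac{p-1}{p^2}\sum_{y\in\mathbb{F}_p^*}\bigl(\zeta_p^{-i}W_f(\beta)\bigr)^{\sigma_y}.
\]
Because $\sqrt{q}=p^{\phi(r^m)/2}\in\mathbb{Z}$ is Galois-fixed (the exponent $\phi(r^m)=r^{m-1}(r-1)$ is even since $r$ is odd), the main term reduces to $\sqrt{q}\sum_{y\in\mathbb{F}_p^*}\zeta_p^{(c-i)y}$, which equals $\sqrt{q}(p-1)$ when $c\equiv i\pmod p$ and $-\sqrt{q}$ otherwise. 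The contribution $\sum_{y\in\mathbb{F}_p^*}(A\zeta_p^{-i})^{\sigma_y}$ is independent of $\beta$ and therefore gives the same additive shift to all three candidate weights. The pivotal observation is that for each $i$, at most one of the three values of $c$ satisfies $c\equiv i\pmod p$ (with a separate bookkeeping for the subcase $r\equiv 1\pmod p$ of part~(3), where $\phi(r^m)\equiv 0\pmod p$), so two of the three candidate weights coincide while the third is distinct, yielding the two-weight property.

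The dimension $k=\phi(r^m)$ follows from the dimension lemma cited immediately before Lemma~4.14: in each case $D$ is an $\mathbb{F}_p^*$-invariant subset of $\mathbb{F}_q^*$ whose $\mathbb{F}_p$-span cannot be contained in a proper subfield of $\mathbb{F}_q$, because the hypothesis that $p$ is a primitive root modulo $r^m$ forces the group generated by $D$ to exhaust $\mathbb{F}_q^*$. The frequencies of the two nonzero weights are then read off directly from the Walsh-spectrum multiplicities of \cite[Theorem~10]{WYL}, which partition $\mathbb{F}_q^*$ according to which Walsh value is attained at each $\beta$. The main obstacle is the algebraic bookkeeping: simplifying $\sum_{y\in\mathbb{F}_p^*}(A\zeta_p^{-i})^{\sigma_y}$ in closed form and uniformly handling the two congruence cases $r\equiv 1\pmod p$ and $r\not\equiv 1\pmod p$, so that after cancellations the claimed minimum distances of the form $\tfrac{(p-1)(q+\sqrt q)(\cdots)}{pr^m}-\tfrac{(p-1)\sqrt q}{p}$ and the explicit Tables 1--4 emerge; beyond this checkpoint only careful arithmetic is required.
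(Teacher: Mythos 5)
Your proposal follows essentially the same route as the paper's proof: both invoke the weight formula $\operatorname{wt}(c_{D^*_{f,i}}(\beta))=\frac{p-1}{p}|D^*_{f,i}|-\frac{p-1}{p^2}\sum_{y\in\mathbb{F}_p^*}(\zeta_p^{-i}W_f(\beta))^{\sigma_y}$ together with the explicit three-valued Walsh spectrum and block sizes from \cite{WYL}, and then evaluate the Galois sums case by case over $i$ and over the block containing $\beta$ (the paper restricts to the case where the three exponents are distinct mod $p$, exactly your ``separate bookkeeping'' remark). Your observation that the $A$-term is a $\beta$-independent shift and that the $\sqrt{q}\,\zeta_p^{c-i}$ term contributes $(p-1)\sqrt q$ or $-\sqrt q$ according to whether $c\equiv i\pmod p$ is precisely the mechanism driving the paper's explicit computations, so the proposal is correct and not materially different.
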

\[ \begin{tabular} {c} Table $1$. The weight distribution of the code $\mathcal{C}_{D_{f,-r^{m-1}}}$ in Proposition 4.14 (1)  \\ 
\begin{tabular}{c|c}
\hline
Weight&Frequency\\
\hline
$0$&$1$\\

$\frac{(p-1)(q+\sqrt{q})(r-1)}{pr^m}$&$q-1-\frac{q-1}{r^{m-1}}+\frac{q-1}{r^m}$ \\ 

$\frac{(p-1)(q+\sqrt{q})(r-1)}{pr^m}-\frac{(p-1)\sqrt{q}}{p}$&$\frac{q-1}{r^{m-1}}-\frac{q-1}{r^{m}}$\\ 
\hline
\end{tabular}
\end{tabular}
\]

\[ \begin{tabular} {c} Table $2$. The weight distribution of the code $\mathcal{C}_{D_{f,\phi(r^m)}}$ in Proposition 4.14 (2)  \\ 
\begin{tabular}{c|c}
\hline
Weight&Frequency\\
\hline
$0$&$1$\\ 

$ \frac{(p-1)(q+\sqrt{q})}{pr^m}$&$q-1-\frac{q-1}{r^m}$ \\ 

$\frac{(p-1)(q+\sqrt{q})}{pr^m}-\frac{(p-1)\sqrt{q}}{p}$&$\frac{q-1}{r^m}$\\ 
\hline
\end{tabular}
\end{tabular}
\]

\[ \begin{tabular} {c} Table $3$. The weight distribution of the code $\mathcal{C}_{D^*_{f,0}}$ in Proposition 4.14 (3)  \\ 
\begin{tabular}{c|c}
\hline
Weight&Frequency\\
\hline
$0$&$1$\\ 

$\frac{(p-1)(q-1)}{pr^m}+\frac{p-1}{p}(q-1-\frac{q+\sqrt{q}}{r^{m-1}})$&$q-1-\frac{q-1}{r^{m-1}}+\frac{q-1}{r^{m}}$ \\ 

$\frac{(p-1)(q-1)}{pr^m}+\frac{p-1}{p}(q+\sqrt{q}-1-\frac{q+\sqrt{q}}{r^{m-1}})$&$\frac{q-1}{r^{m-1}}-\frac{q-1}{r^{m}}$\\ 
\hline
\end{tabular}
\end{tabular}
\]

\[ \begin{tabular} {c} Table $4$. The weight distribution of the code $\mathcal{C}_{D^*_{f,0}}$ in Proposition 4.14 (3)   \\ 
\begin{tabular}{c|c}
\hline
Weight&Frequency\\
\hline
$0$&$1$\\

$\frac{p-1}{p}(q-1-\frac{q+\sqrt{q}}{r^{m-1}})$&$q-1-\frac{q-1}{r^{m-1}}$ \\ 

$\frac{p-1}{p}(q+\sqrt{q}-1-\frac{q+\sqrt{q}}{r^{m-1}})$&$\frac{q-1}{r^{m-1}}$\\  
\hline
\end{tabular}
\end{tabular}
\]

\

\begin{proof} By \cite[Theorem 10]{WYL}, we have $|D^*_{f,0}|=|D_{f,0} \backslash \{0\}|=q-1-\frac{q-1}{r^{m-1}}$, $|D_{f,\phi(r^m)}|=\frac{q-1}{r^{m}}$, and $|D_{f,-r^{m-1}}|=\frac{q-1}{r^{m-1}}-\frac{q-1}{r^{m}}$.
By the proof of Proposition 4.6, we have $f(\Bbb F_q^*)=\{\phi(r^m), -r^{m-1},0\}$ and $$\{W_f(\beta):\beta\in \Bbb F_q^*\}=\{\sqrt{q}\zeta_p^{\phi(r^m)}+A, \sqrt{q}\zeta_p^{-r^{m-1}}+A, \sqrt{q}+A\},$$ where $A=1-\frac{\sqrt{q}+1}{r^m}(r^m-1+\zeta_p^{\phi(r^m)}+\phi(r)(\zeta_p^{-r^{m-1}}-1))$.  

It is thus sufficient to prove the case where the size of
$f(\Bbb F_q^*)=\{\phi(r^m), -r^{m-1},0\} \pmod p $ is three.
We consider the following three cases for the proof.

Case 1. $i=0$. In this case, we consider the following two subcases:

(a) If $\beta\in D^*_{f,0}$, then $W_f(\beta)=\sqrt q+A$. By noting that $\sigma_y(a)=a$ for $a\in \mathbb Q$ and $\sigma_y(\zeta_p^i)=\zeta_p^{iy}$ for $i\in \Bbb F_p^*$, we have $\sum_{y\in \Bbb F_p^*}\sigma_y(\zeta_p^i) =-1$ and
\begin{eqnarray*}&&\sum_{y\in \Bbb F_p^*}(W_f(\beta))^{\sigma_y}=\sum_{y\in \Bbb F_p^*}(\sqrt q+A)^{\sigma_y}\\
&=&(p-1)(\sqrt q+1-\frac{\sqrt{q}+1}{r^m}(r^m-1)+\frac{\sqrt{q}+1}{r^m}(r-1))+\frac{\sqrt{q}+1}{r^m}(1+r-1)  \\
&=&\frac{p(\sqrt {q}+1)}{r^{m-1}}.
\end{eqnarray*}
As a result, we get
\begin{eqnarray*}\operatorname{wt}(c_{D^*_{f,i}}(\beta))&=&\frac{p-1}{p}(q-1-\frac{q-1}{r^{m-1}})-\frac{p-1}{p^2}\cdot\frac{p(\sqrt {q}+1)}{r^{m-1}}=\frac{p-1}{p} (q-1-\frac{q+\sqrt{q}}{r^{m-1}}).\end{eqnarray*}

(b) If $\beta\in D_{f,\phi(r^m)} $ or $D_{f,-r^{m-1}}$, then $W_f(\beta)=\sqrt{q}\zeta_p^{\phi(r^m)}+A$ or $\sqrt{q}\zeta_p^{-r^{m-1}}+A$. Hence, we have that \begin{eqnarray*}&&\sum_{y\in \Bbb F_p^*}(W_f(\beta))^{\sigma_y}=\sum_{y\in \Bbb F_p^*}(\sqrt q \zeta_p^{\phi(r^m)}+A)^{\sigma_y}\\
&=&-\sqrt q+(p-1)(1-\frac{\sqrt{q}+1}{r^m}(r^m-1)+\frac{\sqrt{q}+1}{r^m}(r-1))+\frac{\sqrt{q}+1}{r^m}(1+r-1)  \\
&=&\frac{p(\sqrt {q}+1)}{r^{m-1}}-p\sqrt{q}
\end{eqnarray*}
or \begin{eqnarray*}&&\sum_{y\in \Bbb F_p^*}(W_f(\beta))^{\sigma_y}=\sum_{y\in \Bbb F_p^*}(\sqrt q \zeta_p^{-r^{m-1}}+A)^{\sigma_y}\\
&=&-\sqrt q+(p-1)(1-\frac{\sqrt{q}+1}{r^m}(r^m-1)+\frac{\sqrt{q}+1}{r^m}(r-1))+\frac{\sqrt{q}+1}{r^m}(1+r-1)  \\
&=&\frac{p(\sqrt {q}+1)}{r^{m-1}}-p\sqrt{q}.
\end{eqnarray*}
That is to say, we have
$\sum_{y\in \Bbb F_p^*}(W_f(\beta))^{\sigma_y}=\frac{p(\sqrt {q}+1)}{r^{m-1}}-p\sqrt{q}$. Then \begin{eqnarray*}\operatorname{wt}(c_{D^*_{f,i}}(\beta))&=&\frac{p-1}{p}(q-1-\frac{q-1}{r^{m-1}})-\frac{p-1}{p^2}(\frac{p(\sqrt {q}+1)}{r^{m-1}}-p\sqrt{q})\\
&=&\frac{p-1}{p}(q+\sqrt{q}-1-\frac{q+\sqrt{q}}{r^{m-1}}).\end{eqnarray*}

Case 2. $i=\phi(r^m)$. In this case, we have the following two subcases:

(a) If $\beta\in D^*_{f,0}$ or $\beta\in D_{f,-r^{m-1}}$, then $\sum_{y\in \Bbb F_p^*}(\zeta_p^{-\phi(r^m)}W_f(\beta))^{\sigma_y}=-\frac{p(\sqrt{q}+1)}{r^m}$; thus, we get
\begin{eqnarray*}\operatorname{wt}(c_{D_{f,i}}(\beta))&=&\frac{p-1}{p}\frac{q-1}{r^{m}}+\frac{p-1}{p}\frac{\sqrt{q}+1}{r^m}=\frac{(p-1)(q+\sqrt{q})}{pr^m}.
\end{eqnarray*}

(b) If $\beta\in D_{f,\phi(r^m)}$, then $\sum_{y\in \Bbb F_p^*}(\zeta_p^{-\phi(r^m)}W_f(\beta))^{\sigma_y}=p\sqrt{q}-\frac{p(\sqrt{q}+1)}{r^m}$; so we have
\begin{eqnarray*}\operatorname{wt}(c_{D_{f,i}}(\beta))&=&\frac{p-1}{p}\frac{q-1}{r^{m}}-\frac{p-1}{p^2}(p\sqrt{q}-p\frac{\sqrt{q}+1}{r^m})\\
&=&\frac{(p-1)(q+\sqrt{q})}{pr^m}-\frac{(p-1)\sqrt{q}}{p}.
\end{eqnarray*}

Case 3.  $i=-r^{m-1}$. In this case, there are two subcases to consider as follows:

(a) If $\beta\in D^*_{f,0}$ or $\beta\in D_{f,\phi(r^m)}$, then $\sum_{y\in \Bbb F_p^*}(\zeta_p^{r^{m-1}}W_f(\beta))^{\sigma_y}=-\frac{p(\sqrt{q}+1)(r-1)}{r^m}$; therefore, we have
\begin{eqnarray*}\operatorname{wt}(c_{D_{f,i}}(\beta))&=&\frac{p-1}{p}(\frac{q-1}{r^{m-1}}-\frac{q-1}{r^{m}})+\frac{p-1}{p}\frac{(\sqrt{q}+1)(r-1)}{r^m}\\
&=&\frac{(p-1)(q+\sqrt{q})(r-1)}{pr^m}.
\end{eqnarray*}


(b) If $\beta\in D_{f,-r^{m-1}}$, then $\sum_{y\in \Bbb F_p^*}(\zeta_p^{r^{m-1}}W_f(\beta))^{\sigma_y}=p\sqrt{q}-\frac{p(\sqrt{q}+1)(r-1)}{r^m}$; thus, we get
\begin{eqnarray*}\operatorname{wt}(c_{D_{f,i}}(\beta))&=&\frac{p-1}{p}(\frac{q-1}{r^{m-1}}-\frac{q-1}{r^{m}})-\frac{p-1}{p^2}(p\sqrt{q}-\frac{p(\sqrt{q}+1)(r-1)}{r^m})\\
&=&\frac{(p-1)(q+\sqrt{q})(r-1)}{pr^m}-\frac{(p-1)\sqrt{q}}{p}.
\end{eqnarray*}

This completes the proof.
\end{proof}






\end{document}